\newtheorem{theorem}{Theorem}
\newtheorem{lemma}{Lemma}
\newcommand{\lmref}[1]{Lemma \ref{#1}}
\newcommand{\thref}[1]{Theorem \ref{#1}}
\newcommand{\figref}[1]{Fig. \ref{#1}}
\newcommand{\alref}[1]{Algorithm \ref{#1}}
\newcommand{\appref}[1]{Appendix \ref{#1}}
\newcommand{\Exp}{{\mathbb{E}}}
\newcommand{\expect}[1]{\Exp\left\{#1\right\}}
\newcommand{\tr}[1]{\mathrm{tr}\left(#1\right)}
\newcommand{\diag}[1]{\mathrm{diag}\left(#1\right)}
\newcommand{\logdet}[1]{\log\det\left(#1\right)}
\newcommand{\half}{\frac{1}{2}}
\newcommand{\ppd}[1]{ {\frac{ \partial }{ \partial {#1} }} }
\newcommand{\pppd}[2]{ {\frac{ \partial {#1} }{ \partial {#2} }} }
\newcommand{\argmax}[1]{\mathop{\arg\max}\limits_{#1}}
\newcommand{\maximize}[1]{\mathop{\mathrm{max}}\limits_{#1}}
\newcommand{\st}{\mathrm{subject}\quad\mathrm{to}}
\newcommand{\equaa}{\mathop{=}^{(\textrm{a})}}
\newcommand{\cR}{\mathcal{R}}
\newcommand{\bb}{\mathbf{b}}
\newcommand{\bh}{\mathbf{h}}
\newcommand{\bn}{\mathbf{n}}
\newcommand{\bx}{\mathbf{x}}
\newcommand{\by}{\mathbf{y}}
\newcommand{\bA}{\mathbf{A}}
\newcommand{\bB}{\mathbf{B}}
\newcommand{\bG}{\mathbf{G}}
\newcommand{\bH}{\mathbf{H}}
\newcommand{\bI}{\mathbf{I}}
\newcommand{\bK}{\mathbf{K}}
\newcommand{\bQ}{\mathbf{Q}}
\newcommand{\bR}{\mathbf{R}}
\newcommand{\bU}{\mathbf{U}}
\newcommand{\bV}{\mathbf{V}}
\newcommand{\bX}{\mathbf{X}}
\newcommand{\bbC}{\mathbb{C}}
\newcommand{\tildebx}{\tilde{\bx}}
\newcommand{\tildebK}{\tilde{\bK}}
\newcommand{\tildebH}{\tilde{\bH}}
\newcommand{\tildebR}{\tilde{\bR}}
\newcommand{\tildebQ}{\tilde{\bQ}}
\newcommand{\barbK}{\bar{\bK}}
\newcommand{\bzero}{\mathbf{0}}
\newcommand{\bLambda}{{\boldsymbol\Lambda}}
\newcommand{\bPhi}{{\boldsymbol\Phi}}
\newcommand{\bOmega}{{\boldsymbol\Omega}}
\newcommand{\bomega}{{\boldsymbol\omega}}
\newcommand{\bXi}{{\boldsymbol\Xi}}
\newcommand{\bPi}{{\boldsymbol\Pi}}
\newcommand{\bPsi}{{\boldsymbol\Psi}}
\newcommand{\bGamma}{{\boldsymbol\Gamma}}
\newcommand{\bDelta}{{\boldsymbol\Delta}}
\newcommand{\mm}{\mathrm{m}}
\newcommand{\eve}{\mathrm{eve}}
\newcommand{\tsec}{\mathrm{sec}}
\newcommand{\lb}{\mathrm{lb}}
\newcommand{\ub}{\mathrm{ub}}
\newcommand{\op}{\mathrm{op}}
\newcommand{\equadispwid}{0.35pt}
\begin{document}

\title{\huge Beam Domain Secure Transmission for Massive MIMO Communications}

\author{
Wenqian~Wu,~\IEEEmembership{Student~Member,~IEEE,}  Xiqi~Gao,~\IEEEmembership{Fellow,~IEEE,} Yongpeng~Wu,~\IEEEmembership{Senior~Member,~IEEE} and
Chengshan~Xiao,~\IEEEmembership{Fellow,~IEEE}

\thanks{

W. Wu and X. Q. Gao are with the National Mobile Communications Research Laboratory, Southeast University, Nanjing 210096, P. R. China (e-mail: wq\_wu@seu.edu.cn; xqgao@seu.edu.cn).

Y. Wu is with the Department of Electrical Engineering, Shanghai Jiao Tong University, Minhang 200240, China (e-mail: yongpeng.wu@sjtu.edu.cn).

C. Xiao is with the Department of Electrical and Computer Engineering, Lehigh University, Bethlehem, PA 18015, USA (e-mail: xiaoc@lehigh.edu).
}
}


\maketitle

\begin{abstract}
We investigate the optimality and power allocation algorithm of beam domain transmission for single-cell massive multiple-input multiple-output (MIMO) systems with a multi-antenna passive eavesdropper. Focusing on the secure massive MIMO downlink transmission with only statistical channel state information of legitimate users and the eavesdropper at base station, we introduce a lower bound on the achievable ergodic secrecy sum-rate, from which we derive the condition for eigenvectors of the optimal input covariance matrices. The result shows that beam domain transmission can achieve optimal performance in terms of secrecy sum-rate lower bound maximization. For the case of single-antenna legitimate users, we prove that it is optimal to allocate no power to the beams where the beam gains of the eavesdropper are stronger than those of legitimate users in order to maximize the secrecy sum-rate lower bound. Then, motivated by the concave-convex procedure and the large
dimension random matrix theory, we develop an efficient iterative and convergent algorithm to optimize power allocation in the beam domain. Numerical simulations demonstrate the tightness of the secrecy sum-rate lower bound and the near-optimal performance of the proposed iterative algorithm.
\end{abstract}

\begin{IEEEkeywords}
Beam domain, massive MIMO, physical layer security, statistical channel state information (CSI), power allocation.
\end{IEEEkeywords}

\section{Introduction}

For developing the next generation of wireless communication system, massive multiple-input multiple-output (MIMO) is considered as a promising technology to achieve larger gains in energy efficiency and spectral efficiency, and it has attracted significant interest from industry and academia \cite{Marzetta10Noncooperative,Larsson14Massive,Wang14Cellular}.
Massive MIMO transmission employs a large number of antennas at base station (BS) to serve a comparatively small number of users simultaneously.
In Marzetta's pioneering work \cite{Marzetta10Noncooperative}, which considered non-cooperative massive MIMO systems with single-antenna users and unlimited numbers of BS antennas, it was proven that the effects of fast fading and uncorrelated receiver noise vanish with the growth of the number of BS antennas, and the residual interference, known as pilot contamination, is induced by the reuse of the same pilot signals among adjacent cells.
Since the publication of \cite{Marzetta10Noncooperative}, various aspects of massive MIMO systems have been studied in recent year \cite{Jose11Pilot,Zhang13MAC,Hoydis13How,Ngo13Energy,Adhikary13JSDM,Sun15BDMA,You15Pilot,Lu16Deterministic,Chen16BDMA}.

Owing to the broadcast nature of the wireless medium, security is considered as a vital issue in wireless communication. Traditionally, key-based cryptographic techniques at the network layer were utilized to achieve communication security. However, these approaches are founded on certain assumptions for computational complexity, and are hence potentially vulnerable \cite{Mukherjee14Principles}. Recently, as a complement to key-based cryptographic techniques, physical layer security has attracted remarkable research interest, where information-theoretic security is investigated.
In Wyner's pioneering work \cite{Wyner75Noncooperative}, the wiretap channel, which consists of a transmitter, a legitimate user and an eavesdropper, was considered. It was revealed that if the the eavesdropper's channel is a degraded version of the legitimate user's channel, the transmitter can reliably send private message to the legitimate user, while the private message cannot be decoded by the eavesdropper. Then, more research has investigated physical layer security of multi-antenna systems \cite{Khisti10MISOME,Khisti10MIMOME,Oggier11Capacity,Geraci12Secrecy,Yang14Confidential,Wu12Linear,Cumanan14Secrecy}.
As shown in \cite{Goel08Guaranteeing,Zhou10Secure,Huang12Robust,Lin13Artificial}, if the transmitter only has the knowledge of the imperfect channel state information (CSI) of the eavesdropper, the security of the data transmission can be enhanced by transmitting artificial noise (AN) to disturb the decoding process at the eavesdropper. Moreover, for MIMO wiretap channels, the problem to determine the optimal input covariance matrix which maximizes the ergodic secrecy rate was studied in \cite{Li11Ergodic,Zhang11SWIPT}, where only statistical CSI of the legitimate user is required.

Recently, some works have been dedicated to investigating physical layer security in massive MIMO systems. J. Zhu \emph{et al.} investigated massive MIMO transmission with a passive eavesdropper \cite{Zhu14Secure,Zhu16Linear}, where only imperfect CSI of the legitimate users is available at the BS. An achievable secrecy rate and outage probability with AN generation and matched filter precoder at the BS was derived in \cite{Zhu14Secure}, assuming that the eavesdropper can perfectly cancel all interfering user signals. The work in \cite{Zhu14Secure} was further extended to the scenarios where the regular zero-forcing precoder and AN generation are employed at the BS in \cite{Zhu16Linear}. K. Guo \emph{et al.} proposed three secure transmission schemes for single-cell multi-user massive MIMO systems with distributed antennas in \cite{Guo16Security}. Moreover, Y. Wu \emph{et al.} studied secure transmission strategies for multi-cell massive MIMO systems, where a multi-antenna active eavesdropper can actively send the same pilot sequence as the users' to induce pilot contamination at the BS \cite{Wu16Active}. In \cite{Zhu14Secure,Zhu16Linear,Guo16Security,Wu16Active}, results were obtained by the assumption that instantaneous CSI of the legitimate users is available at BS and the legitimate users are equipped with single-antenna.

The availability of instantaneous CSI at the transmitter (CSIT) plays an important role in most existing transmission schemes for physical layer security.
Actually, in time-division duplex (TDD) communication systems, instantaneous CSIT is acquired via uplink training phase under the reciprocal channel assumption. However, for uplink/downlink RF hardware chains, this reciprocity is invalid at both BS and mobile transceivers \cite{Choi14FDD}.
Furthermore, the length of pilot signal is essentially limited by the channel coherence time. As for frequency-division duplex (FDD) communication systems, where the reciprocity of instantaneous channel is no longer valid, the required number of independent pilot symbols for CSIT acquisition and the CSIT feedback overhead scale with the number of BS antennas \cite{Jindal06MIMO}. Consequently, these practical limitations pose
severe challenges to acquire accurate instantaneous CSI of the legitimate channel at the BS in both TDD and FDD systems. Moreover, as mobility increases, the fluctuations of channel start to change more rapidly, and the round-trip delays of the CSI acquisition turn to be non-negligible with regard to the channel coherence time. For this case, transmitters may acquire outdated instantaneous CSI. Since the statistical channel parameters vary much more slowly than the instantaneous channel parameters, it is more reasonable to exploit the statistical CSIT for precoder design, when transmitters cannot easily acquire accurate instantaneous CSI. In addition, the uplink and downlink statistical CSIT are usually reciprocal in both TDD and FDD systems \cite{Barriac06Space}. Thus, the statistical CSIT can be acquired much more easily by utilizing this channel reciprocity, even if the terminals are equipped with multiple antennas.

C. Sun \emph{et al.} proposed a beam domain transmission for single-cell massive MIMO communications with only statistical CSIT \cite{Sun15BDMA}.
It was proven in \cite{Sun15BDMA} that beam domain transmission is optimal for a sum-rate upper bound maximization. However, the optimality of beam domain transmission for secure transmission with a multiple antenna eavesdropper was not provided in \cite{Sun15BDMA}. More importantly, the influence of the eavesdropper to the optimal transmit power allocation has not been studied in the literature to date.

In this paper, we investigate the secure transmission for single-cell downlink massive MIMO systems with a multiple antenna eavesdropper and multiple antenna legitimate users. Statistical CSIT under the jointly correlated MIMO channel model is considered. In massive MIMO systems, we note that as the number of BS antennas increases, the eigenmatrices of the channel transmit covariance matrices turn to be asymptotically unique and independent of mobile terminals \cite{You15Pilot}. For this case, a lower bound on the achievable ergodic secrecy sum-rate of the secure downlink transmission is derived.
Numerical simulations validate that the derived bound is tight in normal signal-to-noise ratio (SNR) ranges for practical applications such as long-term evolution (LTE), WiFi, and WiMax.
Based on this lower bound, the condition for eigenvectors of the optimal input covariance matrices which maximize the secrecy sum-rate lower bound is derived. It is proved that the beam domain transmission for single-cell downlink massive MIMO systems with statistical CSIT \cite{Sun15BDMA} is optimal for secure massive MIMO transmission in presence of a multiple antenna eavesdropper. As for the case of single-antenna legitimate users, we prove that allocating power to the beams where the beam gains of the eavesdropper are stronger than those of legitimate users will decrease the secrecy sum-rate lower bound. Then, we propose an efficient and fast iterative algorithm for power allocation in beam domain by using concave-convex procedure (CCCP) \cite{Yuille03CCCP} and large dimension random matrix theory. Numerical simulations demonstrate the near-optimal performance of the proposed fast-convergent iterative algorithm.

\textit{Notation}: Lower-case bold-face letters indicate vectors and upper-case bold-face letters indicate matrices. The matrix conjugate-transpose, conjugate, and transpose operations are denoted by superscripts $(\cdot)^H,\ (\cdot)^*$, and $(\cdot)^T$, respectively. $\bI_N$ and $\bzero_N$ denote $N\times N$ dimensional identity matrix and all-zero matrix, respectively. The subscript of the matrix is omitted for brevity, when the dimension of the matrix is clear. Also, $\tr{\cdot}, \expect{\cdot}$, and $\det(\cdot)$ denote matrix trace, ensemble expectation, and determinant operations, respectively. We use $\bA^{-1}$ and $[\cdot]_{mn}$ to represent the inverse of matrix $\bA$ and the $(m,n)$th element of matrix $\bA$, respectively.
$\bA\succeq \mathbf{0}$ indicates a positive semidefinite Hermitian matrix $\bA$. $\diag{\bb}$ indicates a diagonal matrix, whose main diagonal consists of the elements of vector $\bb$. $\odot$ stands for the Hadamard product, and $[x]^+$ represents $\max\{0,x\}$.

\section{System Model}\label{sec:system_model}

We consider secure downlink transmission in a single-cell massive MIMO system, cf. \figref{fig:sys_model}, consisting of an $M$-antenna BS, $K$ legitimate users, each with $N_{r}$ antennas, and a passive eavesdropper with $N_{e}$ antennas. The BS transmits private and independent messages to each legitimate user. All messages are required to be confidential to the eavesdropper. We note that neither the BS nor the users are assumed to know which user is eavesdropped, hence, we assume that any user may be potentially targeted by the eavesdropper.

\begin{figure}[!htbp]
\includegraphics[width=18em]{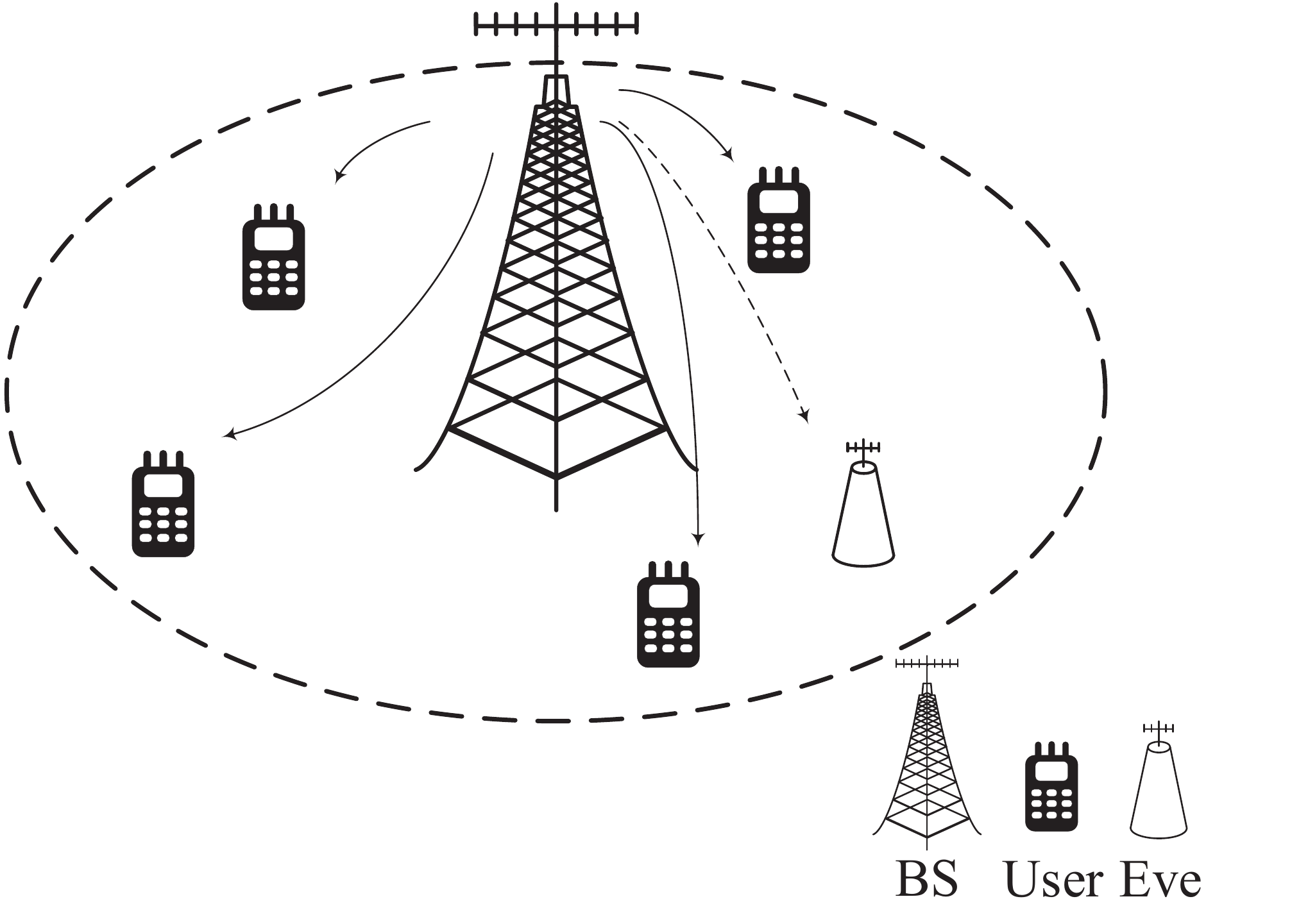}
\centering
\caption{Single-cell massive MIMO system with a multi-antenna passive eavesdropper.}\label{fig:sys_model}
\end{figure}

Let $\bH_{k}\in\bbC^{N_{r}\times M}$ and $\bH_{\eve}\in\bbC^{N_{e}\times M}$ denote the block fading channel matrices of the downlink channels from the BS to the $k$th user and the eavesdropper, respectively. The received signals at the $k$th user and at the eavesdropper are denoted by $\by_{k}\in\bbC^{N_{r}\times 1}$ and $\by_{\eve}\in\bbC^{N_{e}\times 1}$, respectively, and can be written as
\begin{align}
\by_{k}&=\bH_{k}\bx_{k}+\sum_{i\neq k}\bH_{k}\bx_{i}+\bn_{k}\\
\by_{\eve}&=\sum_{i}\bH_{\eve}\bx_{i}+\bn_{\eve}
\end{align}
where $\bx_{k}\in\bbC^{M\times 1}$ denotes the signal vector transmitted to the $k$th user which satisfies $\expect{\bx_{k}}=\bzero$, $\expect{\bx_{k}\bx_{k'}^{H}}=\bzero$ $(k\neq k')$, and $\expect{\bx_{k}\bx_{k}^{H}}=\bQ_{k}\in\bbC^{M\times M}$. $\bn_{k}\in\bbC^{N_{r}\times 1}$ and $\bn_{\eve}\in\bbC^{N_{e}\times 1}$ are zero-mean circularly symmetric complex Gaussian noise with covariance matrices $\bI_{N_r}$ and $\bI_{N_e}$, respectively. Here, without loss of generality, we consider a unit noise variance and assume that the BS has the power constraint
\begin{equation}\label{eq:pow_constraint}
  \sum_{k}\tr{\bQ_{k}}\leq P
\end{equation}
where $P\geq0$ depends on the BS power budget.

In this paper, the jointly correlated MIMO channel is adopted, which jointly model the correlation properties at the receiver and transmitter \cite{Gao09Eigenmode,Weichselberger06Channel}. Specifically, we can write
\begin{align}\label{eq:chan_model}
\bH_{k}&=\bU_{r,k}\bG_{k}\bV_{t,k}^{H}\\
\bH_{\eve}&=\bU_{r,\eve}\bG_{\eve}\bV_{t,\eve}^{H}
\end{align}
where $\bU_{r,k}\in\bbC^{N_{r}\times N_{r}}$,$\bU_{r,\eve}\in\bbC^{N_{e}\times N_{e}}$,$\bV_{t,k}\in\bbC^{M\times M}$, and $\bV_{t,\eve}\in\bbC^{M\times M}$ are deterministic unitary matrices, $\bG_{k}\in\bbC^{N_r \times M}$ and $\bG_{\eve}\in\bbC^{N_e \times M}$ are random matrices with zero-mean independent entries. In massive MIMO systems, as $M\rightarrow\infty$, unitary matrices $\bV_{t,k}$ and $\bV_{t,\eve}$ tend to be independent of mobile terminals and become a deterministic unitary matrix $\bV$ \cite{Sun15BDMA,You15Pilot}, which is only dependent on the topology of BS antenna array. Specially, if the BS is equipped with the uniform linear array (ULA), $\bV$ can be well approximated by the discrete Fourier transform (DFT) matrix \cite{Adhikary13JSDM}. Therefore, the beam domain channel matrices \cite{Sun15BDMA} can be defined as
\begin{align}
  \tildebH_{k}&=\bH_{k}\bV\label{eq:def_beam_chan1}\\
  \tildebH_{\eve}&=\bH_{\eve}\bV.\label{eq:def_beam_chan2}
\end{align}

Also, the eigenmode channel coupling matrices \cite{Gao09Eigenmode} can be defined as
\begin{align}\label{eq:def_coup_mat}
  \bOmega_{k}&=\expect{\bG_{k}\odot\bG_{k}^{*}}\\
  \bOmega_{\eve}&=\expect{\bG_{\eve}\odot\bG_{\eve}^{*}}.
\end{align}
The transmit correlation matrices can be expressed as
\begin{align}\label{eq:def_trans_corr}
  \bR_{k}&=\expect{\bH_{k}^{H}\bH_{k}}=\bV\tildebR_{k}\bV^{H}\\
  \bR_{\eve}&=\expect{\bH_{\eve}^{H}\bH_{\eve}}=\bV\tildebR_{\eve}\bV^{H}
\end{align}
where $\tildebR_{k}\in\bbC^{M\times M}$ and $\tildebR_{\eve}\in\bbC^{M\times M}$ are diagonal matrices with $[\tildebR_{k}]_{mm}=\sum_{n=1}^{N_r}[\bOmega_{k}]_{nm}$ and $[\tildebR_{\eve}]_{mm}=\sum_{n=1}^{N_e}[\bOmega_{\eve}]_{nm}$.

\textit{Remark 1}: The eigenmode channel coupling matrices $\bOmega_{k}$ of the legitimate users and $\bOmega_{\eve}$ of the eavesdropper are assumed to be perfectly known at the BS \cite{Li11Ergodic,Zhang11SWIPT,Wu16Active}.
From this point of view, we notice that it is reasonable to assume that the BS has the knowledge of the statistical CSI of the system terminals in massive MIMO systems. Thus, this assumption can be applied to the scenario where the BS aims to transmit private messages to some users while treating an idle user of the system as the eavesdropper.

In this paper, we assume that the legitimate users and the eavesdropper have instantaneous CSI of their corresponding channel matrices.  At each legitimate user, we treat the aggregate interference-plus-noise $\bn'_k=\sum_{i\neq k}\bH_{k}\bx_{i}+\bn_{k}$ as Gaussian noise with covariance matrix
\begin{equation}\label{eq:cov}
\bK_k=\bI+\sum_{i\neq k}\expect{\bH_{k}\bQ_{i}\bH_{k}^{H}}.
\end{equation}
Here, we assume the covariance matrix $\bK_k$ is known at the $k$th user. Besides, we make the pessimistic assumption that, at the the eavesdropper, signals of all legitimate users can be decoded and cancelled from the received signal $\by_{\eve}$ except the signal transmitted to the user of interest. \cite{Zhu14Secure,Wu16Active}. Since each user in the system has the risk of being eavesdropped, an achievable ergodic secrecy sum-rate can be expressed as \cite{Geraci12Secrecy}
\begin{equation}\label{eq:def_sumrate}
  R_{\tsec}=\sum_{k=1}^{K}[R_{k}-C_{k}^{\eve}]^{+}
\end{equation}
where
\begin{equation}\label{eq:def_user_rate}
  R_{k}=\expect{\logdet{\bI+\bK_{k}^{-1}\bH_{k}\bQ_{k}\bH_{k}^{H}}}
\end{equation}
and \cite{Wu16Active}
\begin{equation}\label{eq:def_eve_rate}
C_{k}^{\eve}=\expect{\logdet{\bI+\bH_{\eve}\bQ_{k}\bH_{\eve}^{H}}}
\end{equation}
where $R_{k}$ denotes an achievable ergodic rate between the BS and the $k$th user, and $C_{k}^{\eve}$ denotes the ergodic capacity between the BS and the eavesdropper, which seeks to decode the private messages intended for the $k$th user.

Notice that, in practical system, it is difficult to acquire instantaneous $\bH_{k}\bQ_{i}\bH_{k}^{H}$ ($i\neq k$) at the $k$th user in massive MIMO systems. Thus, we make an assumption that each legitimate user treats $\bn'_k$ as a Gaussian noise and the covariance matrix with expectation over $\bH_k$ is known at each user's side. With this assumption, the matrix $\bK_k$ defined in \eqref{eq:cov} is the covariance matrix of $\bn'_k$. Therefore, the ergodic rate defined in \eqref{eq:def_user_rate} is reasonable for practice.

In general, the secrecy sum-rate given by \eqref{eq:def_sumrate} is a non-concave function with respect to $(\bQ_{1},\cdots,\bQ_{K})$. Hence, it is difficult to determine the optimal input covariance matrices maximizing the exact secrecy sum-rate. Therefore, we introduce a lower bound on the achievable ergodic secrecy sum-rate, which is given by
\begin{equation}\label{eq:def_sec_lb}
  R_{\tsec,\lb}=\sum_{k=1}^{K}[R_{k}-C_{k,\ub}^{\eve}]^{+}
\end{equation}
where
\begin{align}
C_{k,\ub}^{\eve}&=\logdet{\bK_{\eve,k}}\label{eq:def_eve_ub}
\end{align}
with $\bK_{\eve,k}=\bI+\expect{\bH_{\eve}\bQ_{k}\bH_{\eve}^{H}}$. In \eqref{eq:def_sec_lb}, Jensen's inequality is adopted to obtain the upper bound of $C_{k}^{\eve}$ as in \eqref{eq:def_eve_ub} and consequently, a lower bound on the secrecy sum-rate.


Then, we design the secure transmission strategies by optimizing the secrecy sum-rate lower bound. Our main objective is to design the input covariance matrices $\bQ_{1},\cdots,\bQ_{K}$ maximizing \eqref{eq:def_sec_lb}, which can be formulated as the following optimization problem
\begin{align}\label{prob:def_prob1}
\left[\bQ_{1}^{\op},\cdots,\bQ_{K}^{\op}\right]=&\argmax{\bQ_{1},\cdots,\bQ_{K}}\sum_{k=1}^{K}\left(R_{k}-C_{k,\ub}^{\eve}\right)\notag\\
\st\quad &\tr{\sum_{k=1}^{K}\bQ_{k}}\leq P\notag\\
&\bQ_{k}\succeq \mathbf{0},\quad k=1,\cdots,K
\end{align}
where $(\bQ_{1}^{\op},\cdots,\bQ_{K}^{\op})$ is the optimal solution of the problem in \eqref{prob:def_prob1}. Because any negative term in the summation could increase to zero by setting the corresponding $\bQ_{k}=\bzero,\ k=1,\cdots,K$, the notation $[\cdot]^{+}$ is ignored when solving the problem in \eqref{prob:def_prob1}.

\section{Optimal Secure Transmission}\label{sec:optimal_transmission}

In this section, we first investigate the optimal transmission design based on the secrecy sum-rate lower bound in \eqref{eq:def_sec_lb}. Then, we consider a special case, where each legitimate user is equipped with single-antenna, and reveal the eavesdropper's impact on the optimal transmit power allocation.

\subsection{Optimality of Beam Domain Transmission}

Let $\bQ_{k}=\bPhi_{k}\bLambda_{k}\bPhi_{k}^{H}$, where $\bPhi_{k}$ is the eigenmatrix and $\bLambda_{k}$ is a diagonal matrix of the corresponding eigenvalues. In practice, $\bPhi_{k}$ and $\bLambda_{k}$ represent the directions in which signals are transmitted and the transmit power allocated onto each direction, respectively. For the beam domain transmission proposed in \cite{Sun15BDMA}, $\bPhi_{k}$ is set to be $\bV$, $k=1,...,K$. Next, we prove that this beam domain transmission structure is optimal for the secrecy sum-rate lower bound maximization problem in \eqref{prob:def_prob1}. We obtain the optimal input covariance matrix $\bQ_{k}^{\op}$ as follows.

\begin{theorem}\label{th:eigenmatrix}
The eigenmatrix of the optimal input covariance matrix of each legitimate user, maximizing the secrecy sum-rate lower bound as given by \eqref{eq:def_sec_lb}, is equal to that of the transmit correlation matrix of its own channel, i.e.,
\begin{equation}\label{eq:eigenmatrix_op}
\bPhi_{k}=\bV.
\end{equation}
\end{theorem}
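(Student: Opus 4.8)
The plan is to show that at any feasible point one may replace each $\bQ_{k}$ by the matrix obtained after discarding the off-diagonal entries of $\tildebQ_{k}=\bV^{H}\bQ_{k}\bV$ and rotating back by $\bV$, without decreasing the objective of \eqref{prob:def_prob1} and without violating feasibility. Since such a replacement forces the eigenmatrix of $\bQ_{k}$ to be $\bV$, the theorem follows. Writing $\bQ_{k}=\bV\tildebQ_{k}\bV^{H}$ and substituting the factorizations $\bH_{k}=\bU_{r,k}\bG_{k}\bV^{H}$ and $\bH_{\eve}=\bU_{r,\eve}\bG_{\eve}\bV^{H}$ into every term, all dependence on the input covariances enters through the beam-domain matrices $\tildebQ_{k}$.

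First I would dispose of the eavesdropper penalty and the inter-user interference. Because the entries of $\bG_{k}$ and $\bG_{\eve}$ are zero-mean and independent, an entrywise computation shows that both $\expect{\bG_{k}\tildebQ_{k}\bG_{k}^{H}}$ and $\expect{\bG_{\eve}\tildebQ_{k}\bG_{\eve}^{H}}$ are diagonal, the $n$th diagonal entry of the latter being $\sum_{m}[\tildebQ_{k}]_{mm}[\bOmega_{\eve}]_{nm}$; hence $\bK_{\eve,k}$ in \eqref{eq:def_eve_ub}, and likewise each interference covariance $\bK_{i}$ in \eqref{eq:cov} with $i\neq k$, depends on $\bQ_{k}$ only through $\diag{\tildebQ_{k}}$. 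Consequently $C_{k,\ub}^{\eve}$ and all terms $R_{i},C_{i,\ub}^{\eve}$ with $i\neq k$ are unchanged when $\tildebQ_{k}$ is replaced by $\diag{\tildebQ_{k}}$, and feasibility is preserved because $\tr{\diag{\tildebQ_{k}}}=\tr{\tildebQ_{k}}=\tr{\bQ_{k}}$.

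The crux is the desired user's own rate $R_{k}$. Since $\bK_{k}$ does not involve $\bQ_{k}$, the previous step lets me write $\bK_{k}=\bU_{r,k}(\bI+\bDelta_{k})\bU_{r,k}^{H}$ with $\bDelta_{k}$ diagonal; pulling the unitary $\bU_{r,k}$ out of the determinant reduces $R_{k}$ to $\expect{\logdet{\bI+\bA_{k}\bG_{k}\tildebQ_{k}\bG_{k}^{H}}}$ with the fixed positive-definite diagonal matrix $\bA_{k}=(\bI+\bDelta_{k})^{-1}$. This functional is concave in $\tildebQ_{k}$, being $\expect{\logdet{\bI+\bA_{k}^{1/2}\bG_{k}\tildebQ_{k}\bG_{k}^{H}\bA_{k}^{1/2}}}$, i.e.\ a concave map of a linear image of $\tildebQ_{k}$. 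Invoking the circular symmetry of the entries of $\bG_{k}$, for any diagonal unitary $\bD$ one has $\bG_{k}\bD\stackrel{d}{=}\bG_{k}$, so the value of $R_{k}$ is invariant under $\tildebQ_{k}\mapsto\bD\tildebQ_{k}\bD^{H}$. Taking $\bD$ to have independent uniform phases, averaging this invariance, and applying Jensen's inequality to the concave functional gives $R_{k}(\tildebQ_{k})=\expect{R_{k}(\bD\tildebQ_{k}\bD^{H})}\leq R_{k}(\expect{\bD\tildebQ_{k}\bD^{H}})=R_{k}(\diag{\tildebQ_{k}})$, the last equality holding because the phase average annihilates every off-diagonal entry. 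Thus diagonalizing $\tildebQ_{k}$ cannot decrease $R_{k}$.

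Combining the two steps, the replacement $\bQ_{k}\mapsto\bV\diag{\tildebQ_{k}}\bV^{H}$ never decreases the objective while remaining feasible; performing it for $k=1,\ldots,K$ yields an optimal solution whose covariance matrices are diagonalized by $\bV$, that is, $\bPhi_{k}=\bV$. I expect the third step to be the main obstacle: the eavesdropper and interference terms collapse to their diagonals using only independence and zero mean, but $R_{k}$ depends on the full law of $\bG_{k}$, so ruling out off-diagonal gains genuinely requires both the concavity of $\logdet$ and the distributional phase-invariance $\bG_{k}\bD\stackrel{d}{=}\bG_{k}$.
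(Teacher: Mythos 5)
Your proposal is correct and follows the same overall strategy as the paper: pass to the beam domain via $\tildebQ_{k}=\bV^{H}\bQ_{k}\bV$, observe that $\bK_{k}$ and $\bK_{\eve,k}$ depend on the covariances only through $\diag{\tildebQ_{i}}$ because the columns of $\bG_{k}$, $\bG_{\eve}$ are zero-mean and independent, and then argue via a distributional symmetry plus concavity of $\logdet{\cdot}$ that killing the off-diagonals of $\tildebQ_{k}$ cannot decrease $R_{k}$. The one place you genuinely diverge is the symmetrization: the paper follows Tulino et al.\ and uses the discrete sign-flip matrices $\bPi_{m}$, showing $R(\tildebQ_{1},\ldots,\tildebQ_{K})=R(\bPi_{m}\tildebQ_{1}\bPi_{m},\ldots,\bPi_{m}\tildebQ_{K}\bPi_{m})$ and then averaging the two points with Jensen, iterating over $m=1,\ldots,M$ to null the off-diagonals one row/column at a time; you instead average over a diagonal unitary $\bD$ with i.i.d.\ uniform phases and diagonalize in a single Jensen step, using $\expect{\bD\tildebQ_{k}\bD^{H}}=\diag{\tildebQ_{k}}$. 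Both are valid; your continuous version is slightly cleaner (no induction over $m$) but requires the entries of $\bG_{k}$ to be circularly symmetric so that $\bG_{k}\bD$ equals $\bG_{k}$ in distribution, whereas the paper's sign-flip argument only needs the column distributions to be symmetric under negation — a weaker hypothesis, though both hold for the complex Gaussian entries of the jointly correlated model assumed here. Your reduction of $R_{k}$ to $\expect{\logdet{\bI+\bA_{k}\bG_{k}\tildebQ_{k}\bG_{k}^{H}}}$ with $\bA_{k}$ fixed and diagonal, and the concavity claim that powers the Jensen step, are both sound.
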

\begin{proof}
See \appref{app:eigenmatrix_pf}.
\end{proof}

\textit{Remark 2}: From \eqref{eq:eigenmatrix_op}, the eigenmatrices of the input signals maximizing the secrecy sum-rate lower bound are given by the columns of $\bV$. This implies that beam domain secure transmission is optimal for the lower bound maximization. In addition, for a special case of the downlink multiuser transmission without secrecy constraint where $\bH_{\eve}=\bzero$, Theorem 1 reduces to the optimality condition derived in \cite{Sun15BDMA} in terms of maximizing an upper bound of the sum-rate.

Inspired by the result in Theorem 1, we now focus on the beam domain secure transmission. The received signals at the $k$th user and the eavesdropper can be rewritten as
\begin{align}\label{eq:def_receive}
\by_{k}&=\bH_{k}\bV\tildebx_{k}+\sum_{i\neq k}\bH_{k}\bV\tildebx_{i}+\bn_{k}\notag\\
&=\tildebH_{k}\tildebx_{k}+\sum_{i\neq k}\tildebH_{k}\tildebx_{i}+\bn_{k}\\
\by_{\eve}&=\sum_{i}\bH_{\eve}\bV\tildebx_{i}+\bn_{\eve}\notag\\
&=\sum_{i}\tildebH_{\eve}\tildebx_{i}+\bn_{\eve}
\end{align}
where $\tildebx_{k}=\bV^{H}\bx_{k}$ is the beam domain transmitted signals whose covariance matrix is $\bLambda_{k}$, and the power constraint can be rewritten as $\sum_{k}\tr{\bLambda_{k}}\leq P$.

With $\bPhi_{k}=\bV$, $\tildebK_{k}=\bI+\sum_{i\neq k}\expect{\tildebH_{k}\bLambda_{i}\tildebH_{k}^{H}}$, and $\tildebK_{\eve,k}=\bI+\expect{\tildebH_{\eve}\bLambda_{k}\tildebH_{\eve}^{H}}$, the secrecy sum-rate lower bound in \eqref{eq:def_sec_lb} can be rewritten as
\begin{align}\label{eq:beam_lb}
R_{\tsec,\lb}=&\sum_{k=1}^{K}\Bigg[\expect{\log\det\bigg(\bI+\tildebK_{k}^{-1}\tildebH_{k}\bLambda_{k}\tildebH_{k}^{H}\bigg)}\notag\\
&-\logdet{\tildebK_{\eve,k}}\Bigg]^{+}.
\end{align}
Theorem 1 provides an optimal transmit direction to maximize the secrecy sum-rate lower bound based on the statistical CSI. Next, we discuss the optimal transmit power allocation for maximizing the lower bound.

\subsection{Property of Optimal Power Allocation}

Now we focus on the design of eigenvalues, i.e., the transmit power on each beam. The power allocation problem can be formulated as

\begin{align}\label{prob:def_prob2}
\left[\bLambda_{1}^{\op},\!\cdots\!,\!\bLambda_{K}^{\op}\right]\!=\!&\argmax{\bLambda_{1},\cdots,\bLambda_{K}}\sum_{k=1}^{K}\!\Bigg(\!\expect{\!\log\det\bigg(\!\tildebK_{k}\!+\!\tildebH_{k}\bLambda_{k}\tildebH_{k}^{H}\!\bigg)\!}\notag\\
&-\logdet{\tildebK_{k}}-\logdet{\tildebK_{\eve,k}}\Bigg)\notag\\
\st\quad &\sum_{k}\tr{\bLambda_{k}}\leq P\notag\\
&\bLambda_{k}\succeq \mathbf{0},\quad k=1,\cdots,K
\end{align}
where $(\bLambda_{1}^{\op},\cdots,\bLambda_{K}^{\op})$ is the solution of the above optimization problem.
The original problem in \eqref{prob:def_prob1} is equivalent to the power allocation problem in \eqref{prob:def_prob2}. For obtaining insight for transmit signal design, a lemma is introduced as follows.

\begin{lemma}\label{lm:ineq}
For a positive random variable $x$, it holds that
\begin{equation}
\expect{\frac{x}{a+bx}}\leq\expect{\frac{\bar{x}}{a+bx}}
\end{equation}
where $\bar{x}=\expect{x}$, $a>0$, and $b>0$. The equality holds if and only if $x=\bar{x}$ with probability one for every $x$.
\end{lemma}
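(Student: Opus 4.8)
The plan is to read the claimed inequality as a statement about the sign of a covariance. Since $\bar x=\expect{x}$ is a constant, $\expect{\bar x/(a+bx)}=\bar x\,\expect{1/(a+bx)}$, so the assertion
\[
\expect{\frac{x}{a+bx}}\le \expect{\frac{\bar x}{a+bx}}
\]
is equivalent to
\[
\expect{\frac{x}{a+bx}}-\expect{x}\,\expect{\frac{1}{a+bx}}\le 0 .
\]
The left-hand side is exactly $\mathrm{Cov}\!\left(x,g(x)\right)$ for $g(t)=1/(a+bt)$. Thus the lemma reduces to showing that $x$ and $g(x)$ are negatively correlated, which is intuitively clear because $t\mapsto t$ is increasing while $g$ is strictly decreasing for $b>0$.

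To make the sign rigorous I would use the standard coupling with two independent copies. Let $x_1,x_2$ be i.i.d.\ with the same law as $x$. Because the identity map is nondecreasing and $g$ is nonincreasing, the product $(x_1-x_2)\bigl(g(x_1)-g(x_2)\bigr)\le 0$ holds pointwise. Taking expectations and expanding the four resulting terms, the two cross terms factor through independence as $\expect{x}\,\expect{g(x)}$, and one is left with $2\,\mathrm{Cov}\!\left(x,g(x)\right)\le 0$, which is precisely the desired inequality.

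For the equality case, observe that since $b>0$ the map $g$ is \emph{strictly} decreasing, so $(x_1-x_2)\bigl(g(x_1)-g(x_2)\bigr)$ is strictly negative whenever $x_1\neq x_2$. The expectation of this nonpositive random variable therefore vanishes if and only if $x_1=x_2$ almost surely, and for i.i.d.\ copies this forces $x$ to be almost surely constant, i.e.\ $x=\bar x$ with probability one. This yields the stated ``if and only if''.

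I do not expect a serious obstacle; the only point requiring care is integrability, and it is automatic here: with $a>0$ and $x>0$ one has $0<g(x)\le 1/a$ and $0\le x/(a+bx)\le 1/b$, so every expectation appearing above is finite and the covariance manipulations need no extra moment hypotheses. As an alternative that avoids covariances entirely, the identity $x/(a+bx)=\tfrac1b-\tfrac{a}{b(a+bx)}$ rearranges the claim into $\expect{a+bx}\,\expect{1/(a+bx)}\ge 1$, which is immediate from the Cauchy--Schwarz inequality applied to $\sqrt{a+bx}$ and $1/\sqrt{a+bx}$, with equality exactly when $a+bx$ is almost surely constant.
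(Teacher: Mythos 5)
Your proof is correct, but it takes a different route from the paper. The paper's argument is a one-line application of Jensen's inequality to the function $h(t)=\frac{t-\bar{x}}{a+bt}=\frac{1}{b}-\frac{a/b+\bar{x}}{a+bt}$, which is strictly concave on $t\geq 0$ because $a/b+\bar{x}>0$; hence $\expect{h(x)}\leq h(\bar{x})=0$, with equality iff $x$ is a.s.\ constant. Your main argument instead reads the claim as $\mathrm{Cov}\left(x,g(x)\right)\leq 0$ for the decreasing function $g(t)=1/(a+bt)$ and establishes it via the i.i.d.\ coupling identity $\expect{(x_1-x_2)(g(x_1)-g(x_2))}=2\,\mathrm{Cov}(x,g(x))$ together with the pointwise sign of the integrand --- this is the Chebyshev association inequality. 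That route is more general (it needs only monotonicity of $g$, not concavity of any auxiliary function) and it makes the equality case especially transparent, since strict monotonicity of $g$ forces $x_1=x_2$ a.s.\ and hence $x$ a.s.\ constant; the paper's Jensen step, by contrast, is shorter but hides the verification that $h$ is concave and that Jensen's equality case gives degeneracy. Your Cauchy--Schwarz alternative, after the rearrangement to $\expect{a+bx}\,\expect{1/(a+bx)}\geq 1$, is essentially the paper's proof in disguise: the paper's Jensen inequality for $h$ reduces, after the same algebra, to Jensen for the convex map $t\mapsto 1/t$ applied to $a+bx$, which is exactly that harmonic--arithmetic mean inequality. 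One small housekeeping point in your favor: you explicitly check integrability of the cross terms (the only hypothesis really used is $\expect{x}<\infty$, which is implicit in the statement since $\bar{x}$ appears there), whereas the paper does not comment on it.
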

\begin{proof}
See \appref{app:ineq_pf}.
\end{proof}

Unlike the power allocation problem without secrecy constraint in \cite{Sun15BDMA}, for secure massive MIMO transmission, the eavesdropper has impact on the optimal transmit power allocation. In particular, for secure transmission among single-antenna users, we can obtain the following theorem with the help of Lemma 1.

\begin{theorem}\label{th:eigenvalue2}
When each legitimate user is equipped with single-antenna ($N_{r}=1$), the solution of power allocation problem in \eqref{prob:def_prob2} holds
\begin{equation}\label{eq:eigenvalue2}
[\bLambda_{k}^{\op}]_{mm}=0,\quad \text{if }[\tildebR_{\eve}]_{mm}\ge[\tildebR_{k}]_{mm}.
\end{equation}
\end{theorem}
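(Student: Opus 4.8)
The plan is to prove the claim by contradiction, through a first-order perturbation of the objective in \eqref{prob:def_prob2}, exploiting that $N_r=1$ collapses every legitimate-user quantity to a scalar. Writing $\lambda_{k,m}=[\bLambda_k]_{mm}$ and $g_{k,m}=|[\tildebH_k]_m|^2$ so that $\expect{g_{k,m}}=[\tildebR_k]_{mm}$, the noise-plus-interference level $\tildebK_k=1+\sum_{i\neq k}\tr{\bLambda_i\tildebR_k}$ is a positive scalar and the own-rate term becomes $\expect{\log(\tildebK_k+\sum_{m'}\lambda_{k,m'}g_{k,m'})}$, while the eavesdropper penalty $\logdet{\tildebK_{\eve,k}}$ keeps $\tildebK_{\eve,k}=\bI+\sum_{m'}\lambda_{k,m'}\bM_{k,m'}$ a matrix, where $\bM_{k,m'}=\expect{\tildebH_\eve\be_{m'}\be_{m'}^H\tildebH_\eve^H}\succeq\mathbf 0$ and $\tr{\bM_{k,m'}}=[\tildebR_\eve]_{m'm'}$. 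Assuming $\lambda_{k,m}^{\op}>0$ for some beam with $[\tildebR_\eve]_{mm}\ge[\tildebR_k]_{mm}$, I would show that the objective is strictly decreasing in $\lambda_{k,m}$ over $[0,\lambda_{k,m}^{\op}]$, so that zeroing this eigenvalue (and returning its power to the budget) strictly increases the objective, contradicting optimality.

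The core step is to sign the partial derivative with respect to $\lambda_{k,m}$, which separates into three groups. The cross-interference terms are harmless: since $\lambda_{k,m}$ enters each $\tildebK_i$ ($i\neq k$) linearly and $\tildebK_i\le\tildebK_i+\sum_{m'}\lambda_{i,m'}g_{i,m'}$, their contribution is $[\tildebR_i]_{mm}(\expect{(\tildebK_i+\sum_{m'}\lambda_{i,m'}g_{i,m'})^{-1}}-\tildebK_i^{-1})\le0$, so they only help and may be discarded. The own-rate derivative is $\expect{g_{k,m}/(\tildebK_k+\sum_{m'}\lambda_{k,m'}g_{k,m'})}$; here I invoke \lmref{lm:ineq} conditionally on the remaining beam gains (which are independent of $g_{k,m}$ in the beam domain, since $\bG_k$ has independent entries) to replace the random numerator by its mean $[\tildebR_k]_{mm}$, and then use $\tildebK_k\ge1$ together with the concavity of $t\mapsto t/(c+\lambda_{k,m}t)$ to bound it above by $[\tildebR_k]_{mm}/(1+\lambda_{k,m}[\tildebR_k]_{mm})$. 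For the eavesdropper derivative $\tr{\tildebK_{\eve,k}^{-1}\bM_{k,m}}$ I would use that $\mu\mapsto\mu/(1+\lambda_{k,m}\mu)$ is concave and vanishes at the origin, hence subadditive over the eigenvalues of $\bM_{k,m}$, yielding the lower bound $[\tildebR_\eve]_{mm}/(1+\lambda_{k,m}[\tildebR_\eve]_{mm})$. Because $r\mapsto r/(1+\lambda_{k,m}r)$ is increasing and $[\tildebR_\eve]_{mm}\ge[\tildebR_k]_{mm}$, the eavesdropper term then dominates the own-rate term and the total derivative is strictly negative, which is the desired contradiction.

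The hard part is that the last two bounds, stated for a single isolated beam, ignore the power user~$k$ places on the other beams $m'\neq m$: that power enters $\tildebK_{\eve,k}$ as $\sum_{m'\neq m}\lambda_{k,m'}\bM_{k,m'}\succeq\mathbf 0$, which inflates the eavesdropper denominator and \emph{weakens} the subadditivity lower bound, while the same beams enter the user denominator through $\sum_{m'\neq m}\lambda_{k,m'}g_{k,m'}$ and lower the own-rate derivative. A per-beam comparison is therefore not valid in isolation, and the two inflations must be balanced against each other. My plan is to remove power from \emph{all} offending beams $\{m:[\tildebR_\eve]_{mm}\ge[\tildebR_k]_{mm}\}$ simultaneously, so that every surviving active beam obeys $[\tildebR_\eve]_{m'm'}<[\tildebR_k]_{m'm'}$ and hence $\tr{\sum_{m'}\lambda_{k,m'}\bM_{k,m'}}<\expect{\sum_{m'}\lambda_{k,m'}g_{k,m'}}$; the user denominator then grows at least as fast as $\tildebK_{\eve,k}$, and the derivative comparison is preserved. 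Pushing this balance through the \emph{matrix} $\logdet$ of the eavesdropper (not merely its trace) is the principal technical obstacle, and is exactly where the $N_r=1$ reduction is indispensable, since it makes the legitimate-user side a scalar that is directly comparable with $\tr{\bM_{k,m}}=[\tildebR_\eve]_{mm}$.
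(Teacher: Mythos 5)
Your overall strategy (a first-order sign comparison of the own-rate and eavesdropper derivatives at the optimum, with the cross-interference terms discarded via \lmref{lm:ineq}) is in the same spirit as the paper's KKT argument, but there is a genuine gap exactly where you flag one, and your proposed repair does not close it. The lower bound $[\tildebR_{\eve}]_{mm}/(1+\lambda_{k,m}[\tildebR_{\eve}]_{mm})$ on the eavesdropper derivative fails once the other beams $m'\neq m$ carry power, and your upper bound on the own-rate derivative discards the other beams and the interference from its denominator, so the two bounds are taken over incomparable denominators and cannot be subtracted. The fix you sketch --- zero all offending beams simultaneously and compare $\tr{\sum_{m'}\lambda_{k,m'}\bM_{k,m'}}$ with $\expect{\sum_{m'}\lambda_{k,m'}g_{k,m'}}$ --- is only a trace comparison and controls neither $\tr{\tildebK_{\eve,k}^{-1}\bM_{k,m}}$ nor the expected reciprocal of the user's denominator; you concede this yourself as ``the principal technical obstacle,'' so the argument is incomplete as written. (Incidentally, the ``matrix $\log\det$'' half of that obstacle is illusory: the matrices $\bM_{k,m'}$ are simultaneously diagonalized by $\bU_{r,\eve}$, so $\logdet{\tildebK_{\eve,k}}=\sum_{j=1}^{N_{e}}\log\left(1+\tr{\bLambda_{k}\check{\bR}_{\eve,j}}\right)$; the real difficulty is that the full power vector of user $k$ sits inside both denominators.)

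The paper closes precisely this gap with one move your proposal lacks: it multiplies the stationarity condition \eqref{eq:KKT_pppd} by $\lambda_{k,m}^{\op}$, sums over all $m$, and uses $\mu^{\op}\tr{\bLambda_{k}^{\op}}\geq 0$ together with \lmref{lm:ineq} to obtain the aggregate inequality \eqref{eq:KKT_re}, which says that the reciprocal of the eavesdropper's \emph{full} denominator at its worst receive antenna $j_{\mm}$ dominates the expected reciprocal of the legitimate user's \emph{full} denominator (all beams and all interference included). With that global relation in hand, the own-rate and eavesdropper derivatives at beam $m$ are bounded by $\tilde{r}_{k,m}$ and $\tilde{r}_{\eve,m}$ over the \emph{same} denominator $1+\tr{\bLambda_{k}^{\op}\check{\bR}_{\eve,j_{\mm}}}$, the comparison collapses to the hypothesis $[\tildebR_{\eve}]_{mm}\geq[\tildebR_{k}]_{mm}$, and one concludes $\mu^{\op}<0$, a contradiction. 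To complete your write-up you need this summed-KKT identity or an equivalent global link between the two denominators; the purely local, per-beam derivative comparison cannot succeed on its own.
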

\begin{proof}
See \appref{app:eigenvalue2_pf}.
\end{proof}

Theorem 2 reveals that the optimal transmitted beam sets of different single-antenna users should not contain the beams where the beam gains of the eavesdropper is stronger than that of legitimate user.
In other words, for the optimal power allocation, power only should be allocated among the beams where legitimate users have stronger beam gains than those of the eavesdropper in order to maximize the secrecy sum-rate lower bound.

In general, the optimal power allocation in \eqref{prob:def_prob2} does not have closed-form expression. In the next section, we will provide an efficient algorithm to solve this problem.

\section{Iterative Algorithm for Power Allocation}\label{sec:iterative_algorithm}

In this section, an efficient and fast iterative algorithm for power allocation in \eqref{prob:def_prob2} is developed. First, we transform the original non-convex problem in \eqref{prob:def_prob2} to a series of convex programs by CCCP algorithm, which can be used to find the local optimum of the problem in \eqref{prob:def_prob2}. Next, in order to reduce the computation complexity, the deterministic equivalent of the secrecy sum-rate lower bound is utilized to develop an efficient power allocation algorithm.

Obtaining the optimal power allocation in \eqref{prob:def_prob2} is difficult because of the difference of convex (d.c.) objective functions. To tackle this problem, we introduce CCCP algorithm to solve this power allocation problem. Define
\begin{align}
\cR_{k,1}(\bLambda_{1},\cdots,\bLambda_{K})&=\expect{\logdet{\tildebK_{k}+\tildebH_{k}\bLambda_{k}\tildebH_{k}^{H}}}\notag\\
&\equaa\expect{\logdet{\barbK_{k}+\bG_{k}\bLambda_{k}\bG_{k}^{H}}}\label{eq:cR}\\
\cR_{k,2}(\bLambda_{1},\cdots,\bLambda_{K})&=\logdet{\tildebK_{k}}+\logdet{\tildebK_{\eve,k}}\notag\\
&\equaa\logdet{\barbK_{k}}+\logdet{\barbK_{\eve,k}}
\end{align}
where $\barbK_{k}=\bI+\sum_{i \neq k}\expect{\bG_k\bLambda_{i}\bG_k^H}$ and
$\barbK_{\eve,k}=\bI+\expect{\bG_{\eve}\bLambda_{k}\bG_{\eve}^H}$.
Here, (a) follows from the fact that $\logdet{\bI+\bA\bB}=\logdet{\bI+\bB\bA}$.
Then, we exploit the CCCP algorithm, which transforms the problem in \eqref{prob:def_prob2} into a series of convex programs as follows,
\begin{align}\label{prob:def_prob4}
&\left[\bLambda_{1}^{(i+1)},\cdots,\bLambda_{K}^{(i+1)}\right]=\argmax{\bLambda_{1},\cdots,\bLambda_{K}}\sum_{k}
\cR_{k,1}(\bLambda_{1},\cdots,\bLambda_{K})\notag\\
&\quad\quad-\sum_{k}\tr{\left(\ppd{\bLambda_{k}}\sum_{l}\cR_{l,2}(\bLambda_{1}^{(i)},\cdots,\bLambda_{K}^{(i)})\right)^{T}\bLambda_{k}}\notag\\
&\qquad\qquad\st\quad\sum_{k}\tr{\bLambda_{k}}\leq P\notag\\
&\qquad\qquad\qquad\qquad\qquad\bLambda_{k}\succeq \mathbf{0},\quad k=1,\cdots,K.
\end{align}

For the problem in \eqref{prob:def_prob4}, we note that the CCCP algorithm is a majorize-minimize algorithm. In CCCP algorithm, the concave part $\cR_{k,2}$ is linearized around the solution of current iteration such that the objective function is concave on $(\bLambda_{1},\cdots,\bLambda_{K})$. Subsequently, the non-convex optimization problem in \eqref{prob:def_prob2} is tackled as a series of concave problem in \eqref{prob:def_prob4}.
However, without closed-form expression, evaluating $\cR_{k,1}(\bLambda_{1},\cdots,\bLambda_{K})$ can be computationally cumbersome. To evade Monte-Carlo averaging over the legitimate user channels, we calculate the deterministic equivalent instead of $\cR_{k,1}(\bLambda_{1},\cdots,\bLambda_{K})$ by large dimension random matrix theory.
Following the approach of \cite{Lu16Deterministic}, the closed-form expression of the deterministic equivalent of \eqref{eq:cR} can be calculated as
\begin{align}\label{eq:DE_R1}
&\overline{\cR}_{k,1}(\bLambda_{1},\cdots,\bLambda_{K})=\logdet{\bI+\bGamma_{k}\bLambda_{k}}\notag\\
&\qquad+\logdet{\tilde{\bGamma}_{k}+\barbK_{k}}-\tr{\bI-\tilde{\bPhi}_{k}^{-1}}
\end{align}
where $\bGamma_{k}\in\bbC^{M\times M}$ and $\tilde{\bGamma}_{k}\in\bbC^{N_{r}\times N_{r}}$ are given by
\begin{align}
\bGamma_{k}&=\eta_{k}\left(\tilde{\bPhi}_{k}^{-1}\barbK_{k}^{-1}\right)\label{eq:cal_gam}\\
\tilde{\bGamma}_{k}&=\tilde{\eta}_{k}\left(\bPhi_{k}^{-1}\bLambda_{k}\right)\label{eq:cal_tgam}.
\end{align}
$\tilde{\bPhi}_{k}\in\bbC^{N_{r}\times N_{r}}$ and $\bPhi_{k}\in\bbC^{M\times M}$ are obtain by the iterative equations
\begin{align}
\tilde{\bPhi}_{k}&=\bI+\tilde{\eta}_{k}\left(\bPhi_{k}^{-1}\bLambda_{k}\right)\barbK_{k}^{-1}\label{eq:cal_tphi}\\
\bPhi_{k}&=\bI+\eta_{k}\left(\tilde{\bPhi}_{k}^{-1}\barbK_{k}^{-1}\right)\bLambda_{k}\label{eq:cal_phi}.
\end{align}
Moreover, $\tilde{\eta}_{k}(\bX)\in\bbC^{N_{r}\times N_{r}}$ and $\eta_{k}(\tilde{\bX})\in\bbC^{M\times M}$ are diagonal matrices, whose diagonal entries are given by
\begin{align}
[\tilde{\eta}_{k}(\bX)]_{nn}&=\sum_{m=1}^{M}[\bOmega_{k}]_{nm}[\bX]_{mm}\\
[\eta_{k}(\tilde{\bX})]_{mm}&=\sum_{n=1}^{N_{r}}[\bOmega_{k}]_{nm}[\tilde{\bX}]_{nn}.
\end{align}
Thus, the deterministic equivalent of the secrecy sum-rate lower bound can be expressed as
\begin{align}\label{eq:deter_lb}
\!\overline{R}_{\tsec,\lb}\!=\!\sum_{k=1}^{K}\!\bigg[\overline{\cR}_{k,1}(\bLambda_{1},\!\cdots\!,\bLambda_{K})\!-\!\cR_{k,2}(\bLambda_{1},\!\cdots\!,\bLambda_{K}) \!\bigg]^{+}\!.\!
\end{align}

Note that the deterministic equivalent $\overline{\cR}_{k,1}(\bLambda_{1},\cdots,\bLambda_{K})$ depends on the correlation matrices $\tilde{\eta}_{k}(\bX)$ and $\eta_{k}(\tilde{\bX})$ \cite{Lu16Deterministic}, which can be calculated effectively. From \cite{Dumont10Capacity,Dupuy11Capacity}, we can find that the deterministic equivalent $\overline{\cR}_{k,1}(\bLambda_{1},\cdots,\bLambda_{K})$ is strictly concave on $(\bLambda_{1},\cdots,\bLambda_{K})$. Thus, we turn to consider the following series of convex programs instead of \eqref{prob:def_prob4},
\begin{align}\label{prob:def_prob5}
&\left[\bLambda_{1}^{(i+1)},\cdots,\bLambda_{K}^{(i+1)}\right]=\argmax{\bLambda_{1},\cdots,\bLambda_{K}}\sum_{k}
\overline{\cR}_{k,1}(\bLambda_{1},\cdots,\bLambda_{K})\notag\\
&\quad\quad-\sum_{k}\tr{\left(\ppd{\bLambda_{k}}\sum_{l}\cR_{l,2}(\bLambda_{1}^{(i)},\cdots,\bLambda_{K}^{(i)})\right)^{T}\bLambda_{k}}\notag\\
&\qquad\qquad\st\quad\sum_{k}\tr{\bLambda_{k}}\leq P\notag\\
&\qquad\qquad\qquad\qquad\qquad\bLambda_{k}\succeq \mathbf{0},\quad k=1,\cdots,K.
\end{align}

\textit{Remark 3}: The sequence generated by \eqref{prob:def_prob5} has proven to be convergent and approximately optimal in \cite{Chen16BDMA}. We note that the CCCP algorithm is an effective way to solve the d.c. problem, where the solution of \eqref{prob:def_prob4} is a local optimum of the d.c. problem in \eqref{prob:def_prob2}. Meanwhile, with the purpose of computation complexity reduction, we calculate the deterministic equivalent of $\cR_{k,1}(\bLambda_{1},\cdots,\bLambda_{K})$. As will be shown in Section \ref{sec:Numerical_Results}, the results of deterministic equivalent are nearly identical to those of the Monte-Carlo simulation.
Although the solution of problem in \eqref{prob:def_prob5} is an approximate solution of problem in \eqref{prob:def_prob4}, calculating deterministic equivalent can significantly reduce the computation complexity by avoiding Monte-Carlo averaging over the channels.

Define diagonal matrices and sets as
\begin{align}
&\bDelta_{k}^{(i)}\!\!=\!\!\sum_{l\neq k}\!\sum_{j=1}^{N_{r}}\!\frac{\check{\bR}_{l,j}}{1\!+\!\tr{\!\bLambda_{\backslash l}^{(i)}\check{\bR}_{l,j}\!}}\!+\!\sum_{j=1}^{N_{e}}\!\frac{\check{\bR}_{\eve,j}}{1\!+\!\tr{\!\bLambda_{k}^{(i)}\check{\bR}_{\eve,j}\!}}\label{eq:cal_delta}\!\\
&\mathcal{S}_{k,m,l}\!=\!\bigg\{\big(l',m'\big)\bigg|l'\!\neq\!l, \ \big(l',m'\big)\!\neq\!\big(k,m\big), \ l'\!\in\!\big\{1,\!\cdots\!,K\big\}, \notag\\
&\qquad\qquad\quad m'\in\big\{1,\cdots,M\big\}\bigg\}
\end{align}
where $\bLambda_{\backslash l}^{(i)}=\sum_{l'\neq l}\bLambda_{l'}^{(i)}$, $\check{\bR}_{l,j}=\diag{\bomega_{l,j}}$, and $\check{\bR}_{\eve,j}=\diag{\bomega_{\eve,j}}$. Here, $\bomega_{l,j}$ and $\bomega_{\eve,j}$ are the $j$th row of $\bOmega_{l}$ and $\bOmega_{\eve}$, respectively.
Based on \eqref{eq:cal_gam}-\eqref{eq:cal_phi}, we can define $\bGamma_{k}^{(i+1)}$ and $\tilde{\bGamma}_{k}^{(i+1)}$ by $\bLambda_{1}^{(i+1)},\cdots,\bLambda_{K}^{(i+1)}$, while
letting the $m$th diagonal entries of $\bGamma_{k}^{(i+1)},\ \tilde{\bGamma}_{k}^{(i+1)},\ \check{\bR}_{l,j},\ \bDelta_{k}^{(i)}$ and $\bLambda_{k}^{(i)}$ be $\gamma_{k,m}^{(i+1)},\ \tilde{\gamma}_{k,m}^{(i+1)},\ \check{r}_{l,m,j},\ \delta_{k,m}^{(i)}$ and $\lambda_{k,m}^{(i)}$, respectively.
Utilizing a similar procedure in \cite{Lu16Deterministic}, we can obtain the theorem as follows.
\begin{theorem}\label{th:deterministic}
The solution of the iterative problem in \eqref{prob:def_prob5} is equivalent to that of the problem in \eqref{prob:def_prob6}
\begin{align}\label{prob:def_prob6}
&\left[\bLambda_{1}^{(i+1)},\cdots,\bLambda_{K}^{(i+1)}\right]=\argmax{\bLambda_{1},\cdots,\bLambda_{K}}\sum_{k}
\bigg(\log\det\Big(\bI+\bGamma_{k}\bLambda_{k}\Big)\notag\\
&\qquad+\logdet{\tilde{\bGamma}_{k}+\barbK_{k}}-\tr{\bDelta_{k}^{(i)}\bLambda_{k}}\bigg)\notag\\
&\qquad\qquad\st\quad\sum_{k}\tr{\bLambda_{k}}\leq P\notag\\
&\qquad\qquad\qquad\qquad\qquad\bLambda_{k}\succeq \mathbf{0},\quad k=1,\cdots,K.
\end{align}
The $m$th element $\lambda_{k,m}^{(i+1)}$ of $\bLambda_{k}^{(i+1)}$ satisfies \eqref{eq:det_solution}, which is given at the top of the next page.
\begin{figure*}[t]
\begin{align}\label{eq:det_solution}
\begin{cases}
\frac{\gamma_{k,m}^{(i+1)}}{1+\gamma_{k,m}^{(i+1)}\lambda_{k,m}^{(i+1)}}+\sum\limits_{l \neq k}^{K}\sum\limits_{j=1}^{N_{r}}\frac{\check{r}_{l,m,j}}{\tilde{\gamma}_{l,j}^{(i+1)}+\tr{\check{\bR}_{l,j}\bLambda_{\backslash l}^{(i+1)}}}=\delta_{k,m}^{(i)}+\mu^{(i+1)},\quad\quad&\mu^{(i+1)}<\nu_{k,m}^{(i+1)}-\delta_{k,m}^{(i)}\\
\lambda_{k,m}^{(i+1)}=0,\quad\quad&\mu^{(i+1)}\geq\nu_{k,m}^{(i+1)}-\delta_{k,m}^{(i)}
\end{cases}
\end{align}
\hrulefill
\end{figure*}
In \eqref{eq:det_solution}, the auxiliary variable $\nu_{k,m}^{(i+1)}$ is given by
\begin{equation}
\nu_{k,m}^{(i+1)}=\gamma_{k,m}^{(i+1)}\!+\!\sum_{l \neq k}^{K}\sum_{j=1}^{N_{r}}\frac{\check{r}_{l,m,j}}{\tilde{\gamma}_{l,j}^{(i+1)}\!+\!\sum\limits_{\substack{(l',m')\\\in\mathcal{S}_{k,m,l}}}\!
\check{r}_{l,m',j}\lambda_{l',m'}^{(i+1)}}
\end{equation}
and $\mu^{(i+1)}$ is chosen to satisfy the KKT conditions $\mu^{(i+1)}\left(\tr{\sum_{k}\bLambda_{k}^{(i+1)}}-P\right)=0$ and $\mu^{(i+1)}\geq0$.
\end{theorem}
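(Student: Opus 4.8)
The plan is to establish the two assertions in turn: first, that the linearized concave term appearing in \eqref{prob:def_prob5} coincides with $\sum_k\tr{\bDelta_k^{(i)}\bLambda_k}$, which immediately gives the equivalence of \eqref{prob:def_prob5} and \eqref{prob:def_prob6}; and second, that the Karush--Kuhn--Tucker (KKT) conditions of the concave program \eqref{prob:def_prob6} reduce to the per-beam rule \eqref{eq:det_solution}.

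For the first assertion I would evaluate $\ppd{\bLambda_k}\sum_l\cR_{l,2}$ in closed form. Because the entries of $\bG_l$ and $\bG_{\eve}$ are zero-mean and mutually independent, the matrices $\barbK_l=\bI+\sum_{i\neq l}\expect{\bG_l\bLambda_i\bG_l^H}$ and $\barbK_{\eve,l}=\bI+\expect{\bG_{\eve}\bLambda_l\bG_{\eve}^H}$ are diagonal, with $[\barbK_l]_{jj}=1+\tr{\check{\bR}_{l,j}\bLambda_{\backslash l}}$ and the analogous expression for $\barbK_{\eve,l}$ through the coupling rows $\bomega_{l,j},\bomega_{\eve,j}$. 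Differentiating $\logdet{\barbK_l}=\sum_j\log[\barbK_l]_{jj}$ with respect to $\bLambda_k$ retains only the indices $l\neq k$ and yields $\sum_{l\neq k}\sum_j\check{\bR}_{l,j}/(1+\tr{\bLambda_{\backslash l}^{(i)}\check{\bR}_{l,j}})$, whereas differentiating $\logdet{\barbK_{\eve,l}}$ retains only $l=k$ and gives $\sum_j\check{\bR}_{\eve,j}/(1+\tr{\bLambda_k^{(i)}\check{\bR}_{\eve,j}})$. Their sum is precisely $\bDelta_k^{(i)}$ of \eqref{eq:cal_delta}; since every matrix here is diagonal, the transpose in \eqref{prob:def_prob5} is immaterial and the equivalence follows.

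For the second assertion I would differentiate the deterministic-equivalent objective $\sum_l\overline{\cR}_{l,1}$ of \eqref{eq:DE_R1} with respect to $\lambda_{k,m}$. The decisive simplification, inherited from the analysis of \cite{Lu16Deterministic,Dumont10Capacity,Dupuy11Capacity}, is that once $\bGamma_l,\tilde{\bGamma}_l,\bPhi_l,\tilde{\bPhi}_l$ satisfy the fixed-point equations \eqref{eq:cal_gam}--\eqref{eq:cal_phi}, their implicit derivatives with respect to $\bLambda$ cancel, so they may be treated as constants in the differentiation. Under this property the term $\logdet{\bI+\bGamma_k\bLambda_k}$ in $\overline{\cR}_{k,1}$ contributes $\gamma_{k,m}/(1+\gamma_{k,m}\lambda_{k,m})$, while $\logdet{\tilde{\bGamma}_k+\barbK_k}$ and $-\tr{\bI-\tilde{\bPhi}_k^{-1}}$ contribute nothing (the former because $\barbK_k$ does not depend on $\bLambda_k$), and each $\overline{\cR}_{l,1}$ with $l\neq k$ contributes through $\logdet{\tilde{\bGamma}_l+\barbK_l}$ the cross term $\sum_{l\neq k}\sum_j[\check{\bR}_{l,j}]_{mm}/[\tilde{\bGamma}_l+\barbK_l]_{jj}$, which reproduces the second summand on the left-hand side of \eqref{eq:det_solution}.

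Finally I would form the Lagrangian of \eqref{prob:def_prob6} with multiplier $\mu^{(i+1)}\geq0$ for the sum-power constraint and impose the stationarity and complementary-slackness conditions for $\lambda_{k,m}\geq0$. Setting the gradient of the concave objective (the two terms above minus $\delta_{k,m}^{(i)}$) equal to $\mu^{(i+1)}$ gives the active-beam equation in \eqref{eq:det_solution}; evaluating that same gradient at $\lambda_{k,m}=0$ removes the index $(k,m)$ from the inner summation, which is exactly the content of the restricted index set $\mathcal{S}_{k,m,l}$, and defines the threshold $\nu_{k,m}^{(i+1)}$, yielding the shut-off condition $\mu^{(i+1)}\geq\nu_{k,m}^{(i+1)}-\delta_{k,m}^{(i)}$. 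I expect the \emph{main obstacle} to be the rigorous justification of the vanishing of the implicit derivatives of the deterministic-equivalent variables: this is the core of the random-matrix argument and must be transported carefully from \cite{Lu16Deterministic}, whereas the gradient bookkeeping and the KKT water-filling manipulation are then routine.
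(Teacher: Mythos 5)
Your proposal is correct and follows essentially the same route as the paper's proof: it computes $\ppd{\bLambda_k}\sum_l\cR_{l,2}$ to identify the linearized term with $\tr{\bDelta_k^{(i)}\bLambda_k}$, invokes the vanishing of the implicit derivatives of the deterministic-equivalent auxiliary quantities (the paper's appeal to Theorem 4 of \cite{Lu16Deterministic}) to get $\ppd{\bLambda_k}\overline{\cR}_{k,1}=(\bI+\bGamma_k\bLambda_k)^{-1}\bGamma_k$ and the cross-terms $\sum_{j}\check{\bR}_{l,j}/(\tilde{\gamma}_{l,j}+\tr{\bLambda_{\backslash l}\check{\bR}_{l,j}})$, and then matches KKT conditions to obtain the water-filling rule \eqref{eq:det_solution} with the threshold $\nu_{k,m}^{(i+1)}$. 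You also correctly flag the one genuinely nontrivial ingredient, the cancellation of the implicit derivatives, which the paper likewise delegates to \cite{Lu16Deterministic}.
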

\begin{proof}
See \appref{app:deterministic_pf}.
\end{proof}

\textit{Remark 4}: The solution \eqref{eq:det_solution} has the similar structure to the classical water-filling solution. The distinction lies in multiple terminals which causes a summation in the equation. Thus, generally, it is difficult to obtain the solution \eqref{eq:det_solution} and numerical approaches are required. For the case of $K=1$ (single-user), if the power constraint $\tr{\sum_{k}\bLambda_{k}^{(i+1)}}=P$ is considered, the solution is given by $\lambda_{k,m}^{(i+1)}=\big[(\delta_{k,m}^{(i)}+\mu^{(i+1)})^{-1}-(\gamma_{k,m}^{(i+1)})^{-1}\big]^+$, where $\mu^{(i+1)}$ is chosen to satisfy the constraint $\tr{\sum_{k}\bLambda_{k}^{(i+1)}}=P$.

For the sake of convenience, we define
\begin{align}
&\tilde{C}^{(i)}(\bLambda_{1},\cdots,\bLambda_{K})=\sum_{k}\bigg(\!\!\logdet{\bI+\bGamma_{k}^{(i)}\bLambda_{k}}\notag\\
&\qquad+\logdet{\tilde{\bGamma}_{k}^{(i)}+\barbK_{k}}-\tr{\bDelta_{k}^{(i)}\bLambda_{k}}\!\!\bigg)\label{eq:tildeC}\\
&\rho_{k,m}^{(i)}(x_{k,m})=\frac{\gamma_{k,m}^{(i)}}{1+\gamma_{k,m}^{(i)}x_{k,m}}-\delta_{k,m}^{(i)}-\mu\notag\\
&\quad+\sum\limits_{l \neq k}^{K}\sum\limits_{j=1}^{N_{r}}\frac{\check{r}_{l,m,j}}
{\tilde{\gamma}_{l,j}^{(i)}+\check{r}_{l,m,j}x_{k,m}+\!\sum\limits_{\substack{(l',m')\\\in\mathcal{S}_{k,m,l}}}\!\check{r}_{l,m',j}x_{l',m'}^{(t)}}\label{eq:rho1}\\
&\rho_{k,m}'^{(i)}(x_{k,m})=-\frac{(\gamma_{k,m}^{(i)})^{2}}{(1+\gamma_{k,m}^{(i)}x_{k,m})^{2}}\notag\\
&\quad-\sum\limits_{l \neq k}^{K}\!\sum\limits_{j=1}^{N_{r}}\!\frac{\check{r}_{l,m,j}^{2}}{(\tilde{\gamma}_{l,j}^{(i)}\!+\!\check{r}_{l,m,j}x_{k,m}
\!+\!\sum\limits_{\substack{(l',m')\\\in\mathcal{S}_{k,m,l}}}\check{r}_{l,m',j}x_{l',m'}^{(t)})^{2}}.\label{eq:rho2}
\end{align}
Then, to acquire the solution of the iterative problem in \eqref{prob:def_prob6}, a deterministic equivalent based iterative algorithm is summarized as \alref{alg:CCCP}, and specifically, to obtain $\bLambda_{1}^{(i+1)},\cdots,\bLambda_{K}^{(i+1)}$ in each iteration in \alref{alg:CCCP}, the iterative water-filling algorithm (IWFA) is utilized, which is described in \alref{alg:IWFA}.

\begin{algorithm}[!htbp]
\caption{Deterministic equivalent based iterative algorithm.}
\label{alg:CCCP}
\begin{algorithmic}[1]
\State Initialize $\{\bLambda_{1}^{(0)},\cdots,\bLambda_{K}^{(0)}\}$, thresholds $\xi_{1}$, $\xi_{2}$, and set iteration $i=0$.
\State \textbf{repeat}
\State \quad Initialize $u=0$, and $\tilde{\bPhi}_{k}^{(u)}$.
\State \quad \textbf{repeat}
\State \quad \quad Calculate $\tilde{\bPhi}_{k}^{(u+1)}$ and $\bPhi_{k}^{(u+1)}$ by \eqref{eq:cal_tphi} and \eqref{eq:cal_phi}.
\State \quad \quad set $u=u+1$.
\State \quad \textbf{until} $|\tilde{\bPhi}_{k}^{(u+1)}-\tilde{\bPhi}_{k}^{(u)}|\leq \xi_{1}$.
\State \quad Calculate $\bGamma_{k}^{(i)}$ and $\tilde{\bGamma}_{k}^{(i)}$ by \eqref{eq:cal_gam} and \eqref{eq:cal_tgam}, $k=1,\cdots\!,K$.
\State \quad Calculate $\bDelta_{k}^{(i)}$ based on \eqref{eq:cal_delta}, $k=1,\cdots,K$.
\State \quad Utilize \alref{alg:IWFA} to update $\bLambda_{k}^{(i+1)}, k=1,\cdots,K.$
\State \quad Set $i=i+1$, and calculate $\overline{R}_{\tsec,\lb}^{(i)}$ by \eqref{eq:deter_lb}.
\State \textbf{until} $|\overline{R}_{\tsec,\lb}^{(i)}-\overline{R}_{\tsec,\lb}^{(i-1)}|\leq \xi_{2}$.
\end{algorithmic}
\end{algorithm}

\begin{algorithm}[!htbp]
\caption{Iterative water-filling algorithm.}
\label{alg:IWFA}
\begin{algorithmic}[1]
\State Initialize diagonal matrices $\bX_{k}^{(0)}=\bLambda_{k}^{(i)}, k=1,\cdots,K$, $\tilde{C}^{(i)}(\bX_{1}^{(0)},\cdots,\bX_{K}^{(0)})$, and set iteration $t=0$.
\State \textbf{repeat}
\State \quad Initialize $u=0$ and $\mu^{(u)}=0$.
\State \quad \textbf{repeat}
\State \quad \quad \textbf{for} $k=1$ to $K$ \textbf{do}
\State \quad \quad \quad \textbf{for} $m=1$ to $M$ \textbf{do}
\State \quad \quad \quad \quad \textbf{repeat}
\State \quad \quad \quad \quad \quad Calculate $\rho_{k,m}^{(i)}(x_{k,m}^{(w)})$ and $\rho_{k,m}'^{(i)}(x_{k,m}^{(w)})$ by \eqref{eq:rho1} and \eqref{eq:rho2}.
\State \quad \quad \quad \quad \quad Update $x_{k,m}$ as
\Statex
\vspace{-2mm}
\begin{equation*}
x_{k,m}^{(w+1)}=x_{k,m}^{(w)}-\rho_{k,m}^{(i)}(x_{k,m}^{(w)})/\rho_{k,m}'^{(i)}(x_{k,m}^{(w)}).
\vspace{1mm}
\end{equation*}
\State \quad \quad \quad \quad \quad Set $w=w+1$.
\State \quad \quad \quad \quad \textbf{until} $|x_{k,m}^{(w+1)}-x_{k,m}^{(w)}|\leq\xi_{3}$
\State \quad \quad \quad \textbf{end for}
\State \quad \quad \textbf{end for}
\State \quad \quad Update $\bar{x}_{k,m}\!=\![x_{k,m}^{(w+1)}]^{+}$, and Calculate
\Statex
\vspace{-2mm}
\begin{equation*}
p_{tot}=\sum_{k,m}\bar{x}_{k,m}, \ k=1,\cdots,K; \ m=1,\cdots,M.
\vspace{1mm}
\end{equation*}
\State \quad \quad \textbf{if} $\mu^{(0)}=0$ and $p_{tot}\leq P$ \textbf{then}
\State \quad \quad \quad \textbf{go to} Step 20.
\State \quad \quad \textbf{end if}
\State \quad \quad Update $\mu^{(u+1)}=\mu^{(u)}+\Delta\mu$, where $\Delta\mu$ can be obtain as
\Statex
\vspace{-2mm}
\begin{equation*}
\!\Delta\mu=\min_{k,m}\bigg\{\Big|\rho_{k,m}^{(i)}(\bar{x}_{k,m}+(P-p_{tot})/M)-\rho_{k,m}^{(i)}(\bar{x}_{k,m})\Big|\bigg\}
\vspace{1mm}
\end{equation*}
\Statex and set $u=u+1$.
\State \quad \textbf{until} $|P-p_{tot}|\leq \xi_{4}$.
\State \quad Update $x_{k,m}^{(t+1)}=\frac{1}{KM}\bar{x}_{k,m}+\frac{KM-1}{KM}x_{k,m}^{(t)},\ k=1,\cdots,K;\ m=1,\cdots,M.$
\State \quad Set $t=t\!+\!1$, and calculate $\tilde{C}^{(i)}(\bX_{1}^{(t)},\!\cdots\!,\bX_{K}^{(t)})$ by \eqref{eq:tildeC}.
\State \textbf{until}
\Statex
\vspace{-2mm}
\begin{equation*}
\Big|\tilde{C}^{(i)}(\bX_{1}^{(t)},\!\cdots\!,\bX_{K}^{(t)})-\tilde{C}^{(i)}(\bX_{1}^{(t-1)},\!\cdots\!,\bX_{K}^{(t-1)})\Big|\leq \xi_{5}.
\vspace{1mm}
\end{equation*}
\State Update $\bLambda_{k}^{(i+1)}=\bX_{k}^{t},\ k=1,\cdots,K.$
\end{algorithmic}
\end{algorithm}

For the convergence of \alref{alg:CCCP} and \alref{alg:IWFA}, owing to the utilization of \alref{alg:IWFA} in each iteration of \alref{alg:CCCP}, we first prove the convergence of the proposed \alref{alg:IWFA}. We define $\tilde{C}_{max}^{(i)}$ as the objective function value corresponding to the solution  $\bLambda_{1}^{(i+1)},\cdots,\bLambda_{K}^{(i+1)}$, which is the maximum of the objective function of problem \eqref{prob:def_prob6} in the $i$th iteration with the power constraint $\sum_{k}\tr{\bLambda_{k}}\leq P$. Then, we can obtain the following theorem for \alref{alg:IWFA}.

\begin{theorem}\label{th:alg}
IWFA is a convergent algorithm, where the sequence $\{\tilde{C}^{(i)}(\bX_{1}^{(t)},\cdots,\bX_{K}^{(t)})\}_{t=0}^{\infty}$ generated in \alref{alg:IWFA} converges to $\tilde{C}_{max}^{(i)}$.
\end{theorem}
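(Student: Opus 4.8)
The plan is to exploit three facts: the strict concavity of the surrogate $\tilde{C}^{(i)}$, the compactness of the feasible region, and the particular averaging structure of \alref{alg:IWFA}. I would first note that, with $\bGamma_{k}^{(i)}$, $\tilde{\bGamma}_{k}^{(i)}$ and $\bDelta_{k}^{(i)}$ held fixed throughout the inner loop, the objective $\tilde{C}^{(i)}$ in \eqref{eq:tildeC} is continuous and strictly concave on the compact convex set $\{\bLambda_{k}\succeq\mathbf{0},\ \sum_{k}\tr{\bLambda_{k}}\le P\}$ (see \cite{Dumont10Capacity,Dupuy11Capacity}), so that $\tilde{C}_{max}^{(i)}$ is attained at a unique point characterized by the KKT system \eqref{eq:det_solution}. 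The two innermost loops are easy to dispatch: the Newton recursion on $w$ converges to the unique root of $\rho_{k,m}^{(i)}$ since \eqref{eq:rho2} shows $\rho_{k,m}^{(i)}$ is strictly decreasing, while the $\mu$-loop converges because the total power is monotone decreasing in $\mu$, so its update drives $|P-p_{tot}|$ below $\xi_{4}$ and enforces both the budget and $\mu\ge0$.

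The core of the proof is a monotone-ascent property in the outer index $t$. Writing $\bX^{(t)}=(\bX_{1}^{(t)},\cdots,\bX_{K}^{(t)})$ for the full iterate, the key observation is that the averaging update is equivalent to
\begin{equation*}
\bX^{(t+1)}=\frac{1}{KM}\sum_{(k,m)}\bar{\bX}^{(k,m)},
\end{equation*}
where $\bar{\bX}^{(k,m)}$ denotes $\bX^{(t)}$ with only the scalar entry $x_{k,m}^{(t)}$ replaced by the best response $\bar{x}_{k,m}$; indeed, summing the $KM$ single-coordinate moves reproduces the convex combination $\tfrac{1}{KM}\bar{\bX}+\tfrac{KM-1}{KM}\bX^{(t)}$. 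Feasibility is preserved, since $\sum_{k}\tr{\bar{\bX}_{k}}=P$ and $\bX^{(t)}$ feasible give $\sum_{k}\tr{\bX_{k}^{(t+1)}}\le P$. By Jensen's inequality and concavity,
\begin{equation*}
\tilde{C}^{(i)}(\bX^{(t+1)})\ge\frac{1}{KM}\sum_{(k,m)}\tilde{C}^{(i)}(\bar{\bX}^{(k,m)}).
\end{equation*}
Because $\bar{x}_{k,m}$ maximizes $\tilde{C}^{(i)}$ along the $(k,m)$ coordinate with all other entries frozen at $\bX^{(t)}$, penalized by $\mu x_{k,m}$ — which is exactly the condition $\rho_{k,m}^{(i)}=0$ in \eqref{eq:rho1} followed by the projection $[\cdot]^{+}$ — each single-coordinate move obeys $\tilde{C}^{(i)}(\bar{\bX}^{(k,m)})-\tilde{C}^{(i)}(\bX^{(t)})\ge\mu(\bar{x}_{k,m}-x_{k,m}^{(t)})$. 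Summing over $(k,m)$ and invoking $\sum_{(k,m)}\bar{x}_{k,m}=P\ge\sum_{(k,m)}x_{k,m}^{(t)}$ with $\mu\ge0$ (or $\mu=0$ when the budget is slack) makes the right-hand side at least $\tilde{C}^{(i)}(\bX^{(t)})$; hence $\{\tilde{C}^{(i)}(\bX^{(t)})\}_{t}$ is nondecreasing.

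Being nondecreasing and bounded above by $\tilde{C}_{max}^{(i)}$, the sequence converges. To identify the limit, I would extract by compactness a convergent subsequence $\bX^{(t_{j})}\to\hat{\bX}$; continuity of $\tilde{C}^{(i)}$ then equates $\tilde{C}^{(i)}(\hat{\bX})$ with the limit, and since the ascent increments vanish, every inequality in the chain above must hold with equality at $\hat{\bX}$, forcing the best response to coincide with $\hat{\bX}$. A fixed point of the best-response map satisfies \eqref{eq:det_solution}, i.e. the KKT conditions of \eqref{prob:def_prob6}, which by strict concavity are sufficient for global optimality; therefore $\hat{\bX}$ is the unique maximizer and $\tilde{C}^{(i)}(\hat{\bX})=\tilde{C}_{max}^{(i)}$.

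The step I expect to be the main obstacle is precisely this passage to the limit: one must verify that the water-filling best-response map is continuous in $\bX^{(t)}$ — so that a limit point of the iteration is genuinely a fixed point, hence a KKT point of \eqref{prob:def_prob6} — and that the single-coordinate optimality inequality remains valid with the shared multiplier $\mu$ even at boundary coordinates clipped to zero by $[\cdot]^{+}$. The averaging reformulation is what renders the damping factor $1/(KM)$ natural and the Jensen step exact, so isolating it cleanly is the part I would treat most carefully.
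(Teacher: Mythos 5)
Your proof is correct and follows essentially the same route as the paper: the auxiliary function $\bar{C}^{(i)}$ in Appendix E is exactly your average $\frac{1}{KM}\sum_{(k,m)}\tilde{C}^{(i)}(\bar{\bX}^{(k,m)})$ of single-coordinate best responses, and the paper's inequalities \eqref{eq:C1}--\eqref{eq:IWFA_conv} are your Jensen/concavity step yielding monotone ascent of $\tilde{C}^{(i)}(\bX^{(t)})$. Your closing argument (subsequential limit, fixed point of the best-response map, KKT plus strict concavity implying the limit is $\tilde{C}_{max}^{(i)}$) is actually spelled out more fully than in the paper, which simply invokes convexity of \eqref{prob:def_prob6} to assert convergence.
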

\begin{proof}
See \appref{app:alg_pf}.
\end{proof}

As for \alref{alg:CCCP}, by using the monotonically increasing property of CCCP, the sequence $\{\bLambda_{1}^{(i)},\cdots,\bLambda_{K}^{(i)}\}^{\infty}_{i=0}$ generated by \alref{alg:CCCP} can be easily proven to be convergent. Thus, the proof of the convergence of \alref{alg:CCCP} is omitted.

\textit{Remark 5}: For the sake of the summation of fraction functions, Newton's method \cite{Cormen09Newton} is applied to obtain approximate roots in Step 9 of the IWFA. In addition, in order to make \alref{alg:IWFA} converge fast, we can update $\bX_{k}^{(t+1)}$ with the results
$\bar{\bX}_{k}$ obtained in Step 20 of IWFA if $\tilde{C}^{(i)}(\bar{\bX}_{1},\cdots,\bar{\bX}_{K})$ is increased, whereas the relation \eqref{eq:IWFA_conv} suggests to update with $\left(\frac{1}{KM}\bar{\bX}_{k}+\frac{KM-1}{KM}\bX_{k}^{(t)}\right)$. To guarantee the convergence, we still use $\left(\frac{1}{KM}\bar{\bX}_{k}+\frac{KM-1}{KM}\bX_{k}^{(t)}\right)$ to update when the result $\tilde{C}^{(i)}(\bar{\bX}_{1},\cdots,\bar{\bX}_{K})$ is not increasing in each iteration.
Also, to make \alref{alg:CCCP} converge fast, we can initialize $(\bLambda_{1}^{(0)},\cdots,\bLambda_{K}^{(0)})$ by allocating equal power among the non-overlapping beams where the beam gains of legitimate users are much stronger than those of the eavesdropper. Note that simulations in the following section have demonstrated that both \alref{alg:CCCP} and \alref{alg:IWFA} can converge within a few iterations.

Finally, we discuss the computational complexities of \alref{alg:CCCP} and \alref{alg:IWFA}. For each iteration in \alref{alg:CCCP}, we utilize \alref{alg:IWFA} to obtain $\bLambda_{1}^{(i+1)},\cdots,\bLambda_{K}^{(i+1)}$. Since the iterations of diagonal matrices $\tilde{\bPhi}_{k}^{(u+1)}$ and $\bPhi_{k}^{(u+1)}$ can quickly converge and the complexity of calculating $\tilde{\bPhi}_{k}^{(u+1)}$ and $\bPhi_{k}^{(u+1)}$ is relatively low, the major complexity for one iteration of \alref{alg:CCCP} is constituted by the complexity of \alref{alg:IWFA}. As for \alref{alg:IWFA}, we adopt Newton's method \cite{Cormen09Newton} to solve the equation, consisting of the summation of fraction functions. If the precision is set to $g$ digits, the convergence of Newton's method will require $\log g$ iterations \cite{Cormen09Newton}. For the outer iteration of $\tilde{C}^{(i)}(\bX_{1}^{(t)},\cdots,\bX_{K}^{(t)})$ in \alref{alg:IWFA}, it will converge within a few iterations as the simulation illustration in the following section. Thus, the complexity of one iteration in \alref{alg:CCCP} will approximate to the complexity of inner iteration of \alref{alg:IWFA}, which is $O(KM\log g+KM)$. Then, the whole complexity of \alref{alg:CCCP} can be approximated by $O(LKM\log g+LKM)$, where $L$ is the iteration times of \alref{alg:CCCP}. As shown in the next section, $L$ is usually small.

\section{Numerical Results}\label{sec:Numerical_Results}

In this section, numerical results are provided to evaluate the secrecy performance of the beam domain secure transmission. Since the jointly correlated channel is a good approximation for WINNER II channel model, we utilize the WINNER II channel model to generate $\bH_k$ and $\bH_{\eve}$. Note that the WINNER II channel model is a geometry-based stochastic channel model. In our simulations, the suburban scenario under the non-line-of-sight (NLOS) condition is considered. Both BS and terminals are equipped with ULAs. Neither shadow fading nor path loss is considered. We consider the legitimate users and the eavesdropper are uniformly distributed within the cell. Here, in Step 1 in \alref{alg:CCCP}, when initialize $(\bLambda_{1}^{(0)},\cdots,\bLambda_{K}^{(0)})$, we choose $16$ strongest beams by comparing $[\tildebR_{k}]_m^m-[\tildebR_{\eve}]_m^m$, $k=1,\cdots,K$, $m=1,\cdots,M$. Then, we allocate equal power among these beams.

\begin{figure}[!htbp]
\includegraphics[width=25em]{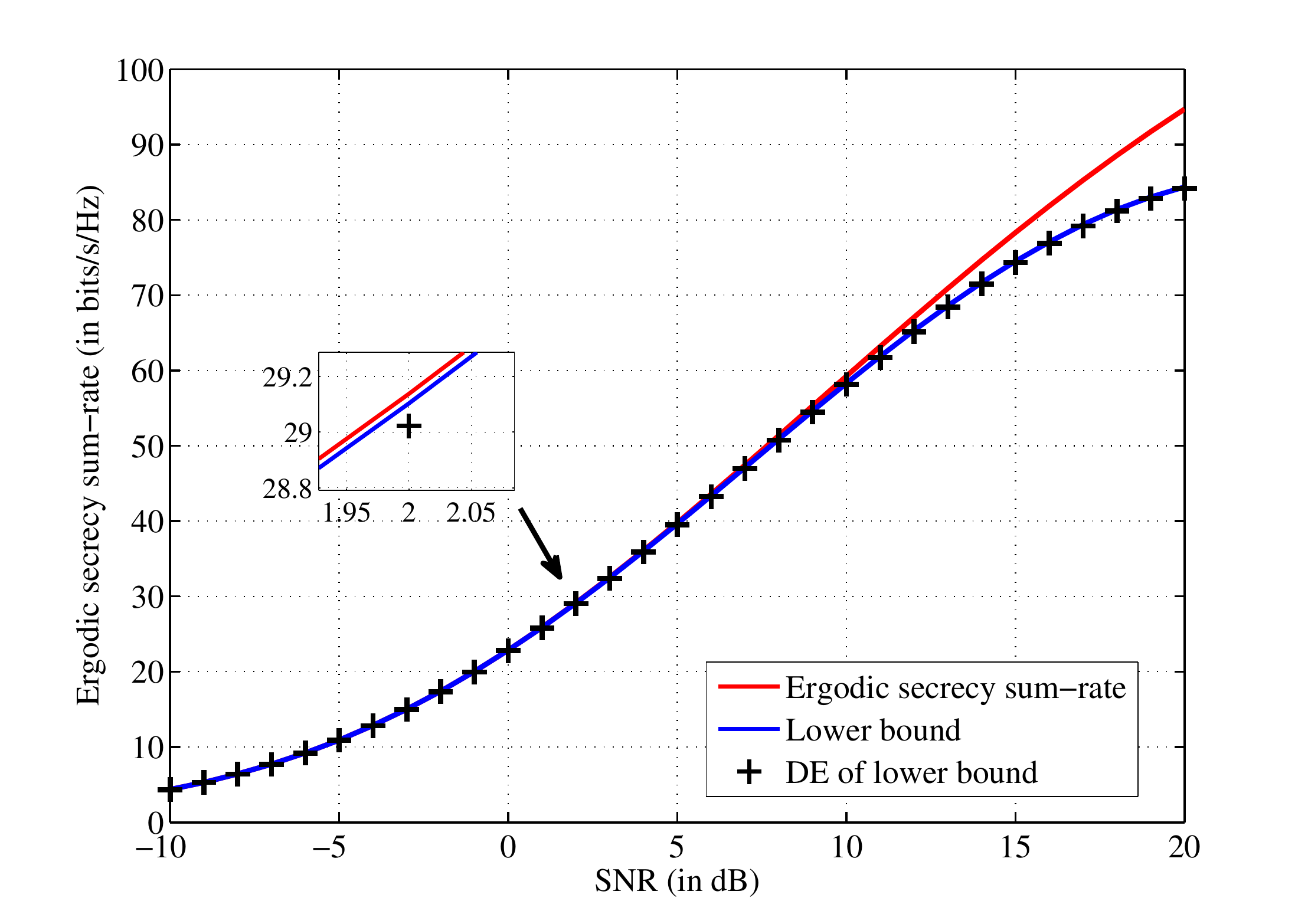}
\centering
\caption{Comparison of the ergodic secrecy sum-rate and the lower bound. Results are shown versus the SNRs of the WINNER II channel with $M=128, N_r=4, N_e=4$, and $K=8$. The deterministic equivalent of the lower bound is also depicted.}\label{fig:compare_lb}
\end{figure}

\begin{figure}[!htbp]
\includegraphics[width=25em]{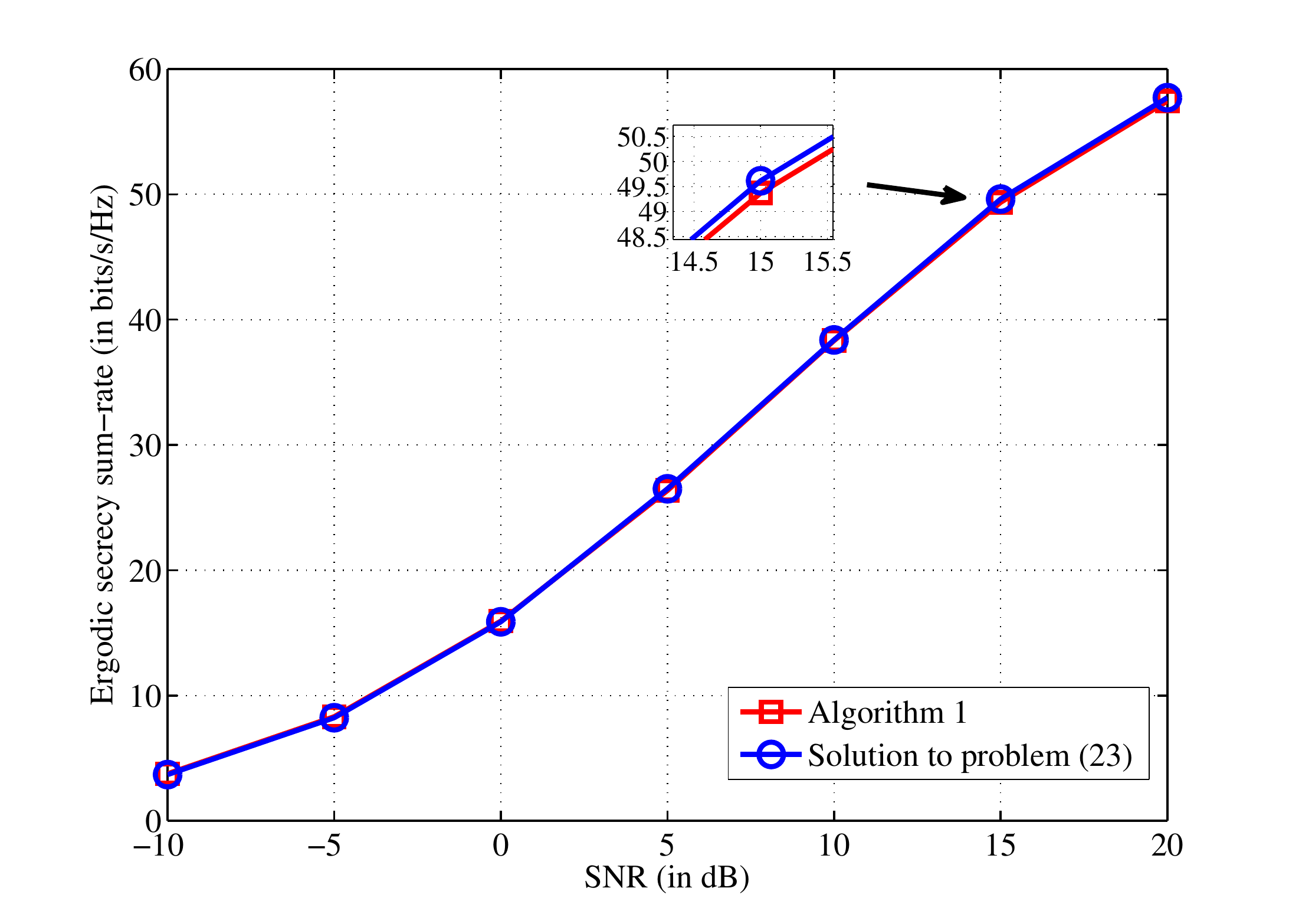}
\centering
\caption{Exact ergodic secrecy sum-rate for the proposed iterative power allocation algorithm in \alref{alg:CCCP} and the solution of power allocation problem \eqref{prob:def_prob2}. Results are shown versus the SNRs of the WINNER II channel with
$M=64,N_r=4,N_e=4$, and $K = 4$.}\label{fig:compare_opt}
\end{figure}

\figref{fig:compare_lb} compares the ergodic secrecy sum-rate \eqref{eq:def_sumrate} with its lower bound \eqref{eq:def_sec_lb}, which are evaluated by Monte-Carlo simulations. Here, we consider $M=128, N_r=4, N_e=4$, and $K=8$. We simulate from SNR $=-10$ dB to SNR $=20$ dB. These are normal SNR ranges for practical applications such as LTE, WiFi, and WiMax \cite{Vu07MIMO}. \figref{fig:compare_lb} illustrates that the lower bound is quite tight in the considered SNR ranges. Moreover, we compare the Monte-Carlo simulation results and the deterministic equivalent of the lower bound given in \eqref{eq:deter_lb}. We find that the deterministic equivalent results are nearly identical to the Monte-Carlo simulation results, which is also depicted in \figref{fig:compare_lb}.

\begin{figure}[!htbp]
\includegraphics[width=25em]{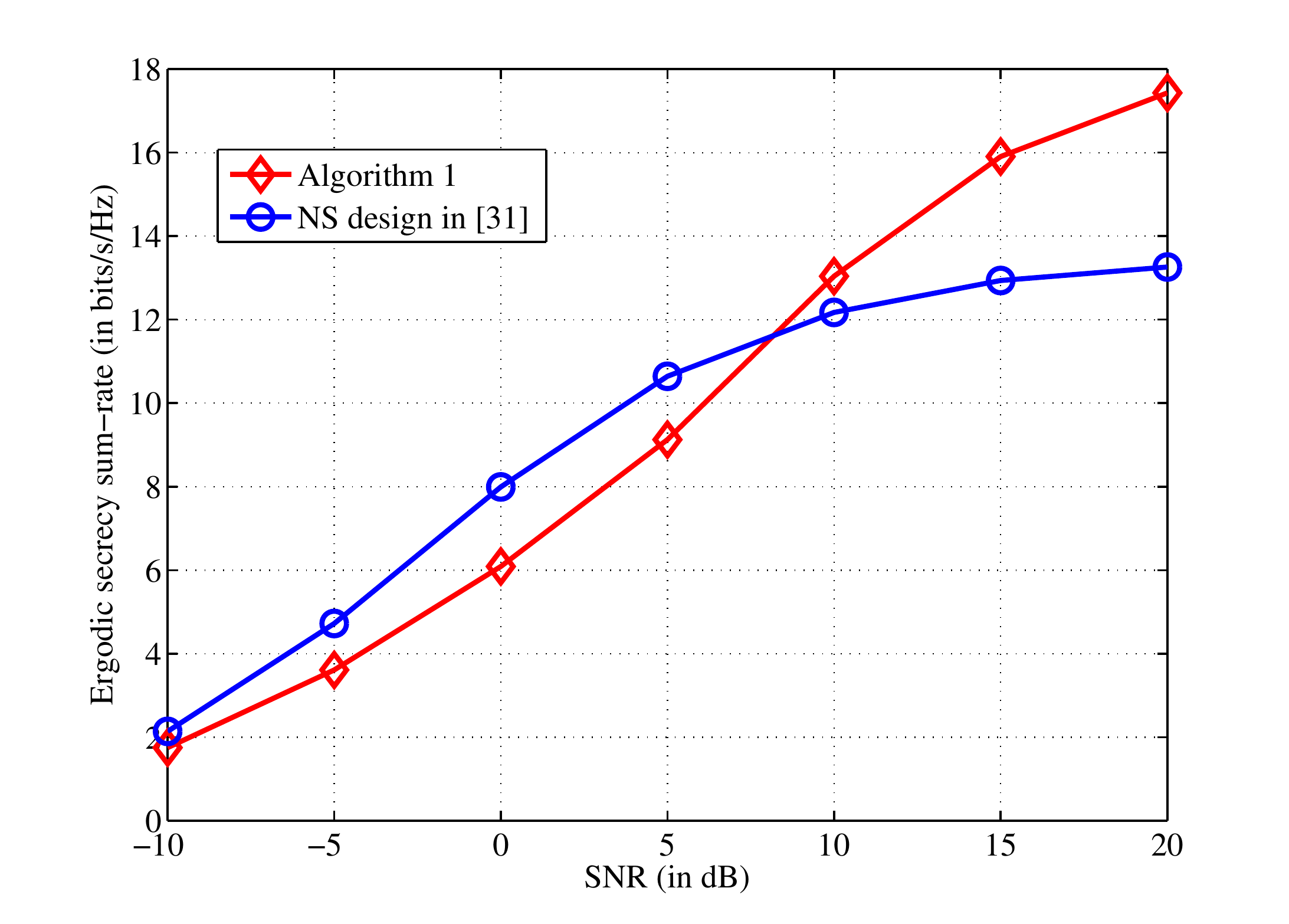}
\centering
\caption{Exact ergodic secrecy sum-rate for the proposed iterative power allocation algorithm in \alref{alg:CCCP}, the NS design in \cite{Wu16Active}. Here, instantaneous CSIT is utilized in \cite{Wu16Active} while statistical CSIT is utilized in \alref{alg:CCCP}. Results are shown versus the SNRs of the WINNER II channel with $M=128, N_r=1, N_e=4, K=12$. }\label{fig:alg_performance}
\end{figure}

\figref{fig:compare_opt} compares the performance of \alref{alg:CCCP} against that of the optimal power allocation for secrecy lower bound maximization. Here, the optimal power allocation is obtained by solving problem \eqref{prob:def_prob2} by utilizing the optimization toolbox in Matlab. For the sake of computational complexity, we set $M=64, N_r=4, N_e=4$, and $K=4$. As shown in \figref{fig:compare_opt}, the performance of these methods are nearly identical, which indicates the near-optimal performance of \alref{alg:CCCP}. Furthermore, the computational complexity of \alref{alg:CCCP} is much lower than the complexity of solving problem \eqref{prob:def_prob2} by optimization toolbox in Matlab.

\figref{fig:alg_performance} compares the performance of \alref{alg:CCCP} against that of the null space (NS) design in \cite{Wu16Active} with $M=128, N_r=1, N_e=4$, and $K=12$. In the NS design, orthogonal pilot sequences and matched filter precoding are used for $K$ single-antenna legitimate users, and artificial noise is not used. Moreover, the NS design in \cite{Wu16Active} refers to transmitting in NS of the transmit correlation matrix of the eavesdropper.
Here, we do not consider pilot contamination attack and the minimum mean square estimate is adopted to obtain the instantaneous CSIT of legitimate users. Note that instantaneous CSIT is required in the NS design while \alref{alg:CCCP} only need statistical CSIT.
Our simulation results show that \alref{alg:CCCP} can achieve a much higher secrecy sum-rate at high SNR, and the secrecy sum-rate performance of \alref{alg:CCCP} is only slightly lower than that of the NS design at low SNR.
This is because the interference from other users has much influence on secrecy sum-rate at high SNR, where power allocation can be beneficial while matched filter precoding is unhelpful.
Furthermore, as with \cite{Wu16Active}, transmitting in the NS of the transmit correlation matrix of the eavesdropper is not a optimal scheme.
For the case of low SNR, since power allocation can bring little secrecy sum-rate gain when the total transmitted power is comparable to the noise power, the performance of NS design is better, where instantaneous CSIT is utilized.

\begin{figure}[!htbp]
\includegraphics[width=25em]{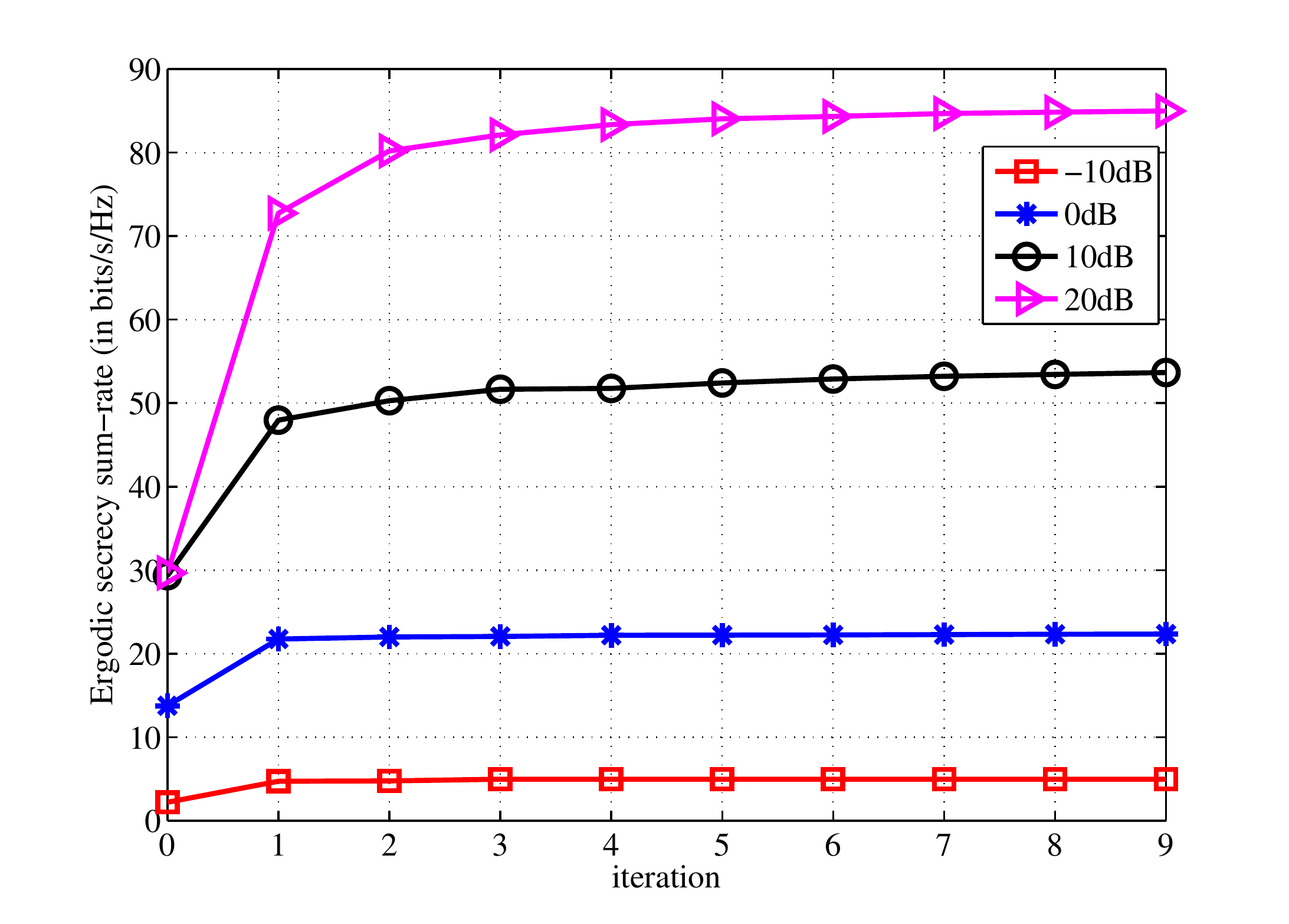}
\centering
\caption{Convergence of \alref{alg:CCCP}. Results are shown versus the iteration times for different SNRs with $M=128, N_r=4, N_e=4$, and $K=8$.}\label{fig:alg1_convergence}
\end{figure}

\begin{figure}[!htbp]
\includegraphics[width=25em]{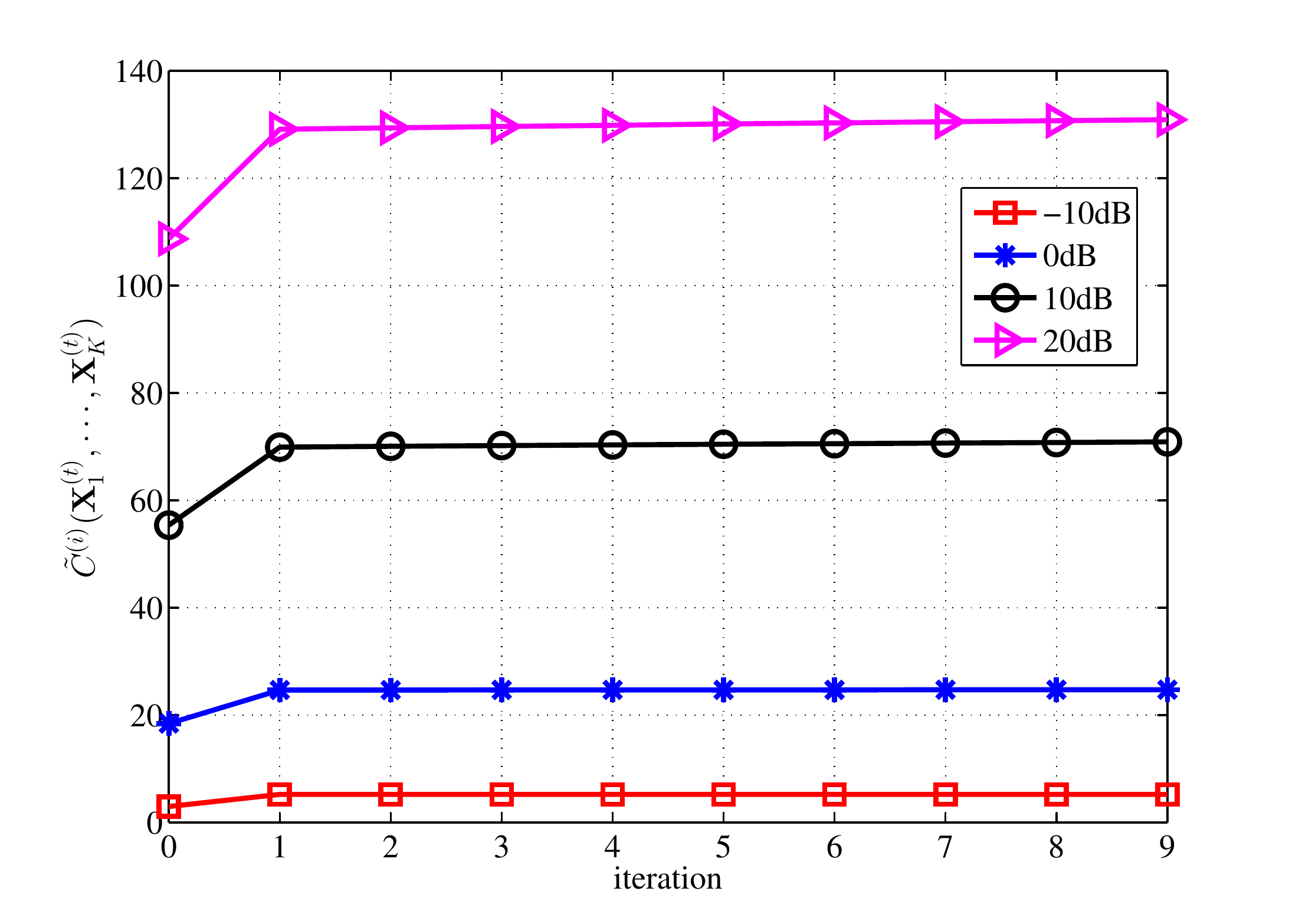}
\centering
\caption{Convergence of \alref{alg:IWFA}. Results are shown versus the iteration times for different SNRs with $M=128, N_r=4, N_e=4$, and $K=8$.}\label{fig:alg2_convergence}
\end{figure}

In order to further illustrate the performance of the algorithms, the convergence of the proposed \alref{alg:CCCP} and \alref{alg:IWFA} are shown in \figref{fig:alg1_convergence} and \figref{fig:alg2_convergence}, respectively. Here, we set the SNR as $-10$ dB, $0$ dB, $10$ dB and $20$ dB, respectively. For \alref{alg:CCCP}, the optimal performance can be approached after the first iteration at low SNR, and the near-optimal performance can still be achieved with in a few iterations at high SNR. From this result, we note that the convergence of \alref{alg:CCCP} turns to be slower, as the SNR increases. It is because that the total transmitted power is comparable to the noise power at low SNR, which indicates little secrecy sum-rate gain acquired from power allocation. However, for the case of high SNR, since the interference from other users has more effect on secrecy sum-rate, power allocation can be helpful. Therefore, more iterations are required. As for \alref{alg:IWFA}, we can find that the proposed \alref{alg:IWFA} converges after only a few iterations, especially, the proposed \alref{alg:IWFA} could achieve near-optimal performance after the first iteration in all cases.

\section{Conclusion}\label{sec:Conclusion}

In this work, we have investigated downlink single-cell massive MIMO transmission with a multi-antenna passive eavesdropper, where only statistical CSI of legitimate users and the eavesdropper is available at the BS. As the number of BS antennas goes to infinity, the eigenmatrices of the channel transmit covariance matrices turn to be identical and independent of mobile terminals. Based on this characteristic, we obtained the condition for eigenmatrix of the optimal input covariance matrix to maximize secrecy sum-rate lower bound. Our analysis showed that the eigenmatrix of the optimal input covariance matrices, maximizing the secrecy sum-rate lower bound, is equal to that of the transmit correlation matrix. Thus, beam domain transmission, satisfying this condition, can achieve optimal performance with respect to secrecy sum-rate lower bound maximization. Also, for the case of single-antenna legitimate users, we revealed that it is optimal to allocate no power to the beams where the beam gains of the eavesdropper are stronger than those of legitimate users with the purpose of increasing the secrecy sum-rate lower bound. In addition, for solving the power allocation problem, we developed an efficient iterative and convergent algorithm. The tightness of the secrecy sum-rate lower bound and the near-optimal performance of the proposed iterative algorithms which converge rapidly were demonstrated through simulations.
\appendices

\section{Proof of {\thref{th:eigenmatrix}}}\label{app:eigenmatrix_pf}

From the definition of the beam domain channel in \eqref{eq:def_beam_chan1} and \eqref{eq:def_beam_chan2}, the objective function of the problem in \eqref{prob:def_prob1} can be expressed as
\begin{align}\label{eq:sec_rate_re}
R(\tildebQ_{1},\cdots,\tildebQ_{K})=&\sum_{k=1}^{K}\bigg(\expect{\logdet{\bI+\bK_{k}^{-1}\tildebH_{k}\tildebQ_{k}\tildebH_{k}^{H}}}\notag\\
&-\log\det\Big(\bK_{\eve,k}\Big)\bigg)
\end{align}
where $\tildebQ_{k}=\bV^{H}\bQ_{k}\bV$, $k=1,\cdots,K$.

Utilizing the fact that random matrices $\tildebH_{k}$ ($\tildebH_{\eve}$) are column-independent with zero-mean entries, we have
\begin{align}
\bK_k&=\bI+\sum_{i\neq k}\expect{\bH_{k}\bQ_{i}\bH_{k}^{H}}\notag\\
&=\bI+\sum_{i\neq k}\expect{\tildebH_{k}\tildebQ_{i}\tildebH_{k}^{H}}\notag\\
&=\bI+\sum_{i\neq k}\expect{\tildebH_{k}\bXi_{i}\tildebH_{k}^{H}}\label{eq:K_k}
\end{align}
\begin{align}
\bK_{\eve,k}&=\bI+\expect{\bH_{\eve}\bQ_{k}\bH_{\eve}^{H}}\notag\\
&=\bI+\expect{\tildebH_{\eve}\tildebQ_{k}\tildebH_{\eve}^{H}}\notag\\
&=\bI+\expect{\tildebH_{\eve}\bXi_{k}\tildebH_{\eve}^{H}}\label{eq:K_eve}
\end{align}
with $\bXi_{k}=\tildebQ_{k}\odot\bI$ $(k=1,\cdots,K)$. Thus, the off-diagonal entries of $\tildebQ_{1},\cdots,\tildebQ_{K}$ do not affect the value of matrices $\bK_{k}$ and $\bK_{\eve,k}$.

Then, using a similar technique in \cite{Tulino06Capacity}, we define $\bPi_{m}\in\bbC^{M\times M}$ as a diagonal matrix whose diagonal entries are $1$ except the $(m,m)$th entry, which is $-1$. The entries of $\bPi_{m}\tildebQ_{k}\bPi_{m}$ are equivalent to those of $\tildebQ_{k}$ except the off-diagonals in the $m$th row and $m$th column, whose sign is reversed. Therefore, $\bPi_{m}\tildebQ_{k}\bPi_{m}\odot\bI=\tildebQ_{k}\odot\bI=\bXi_{k}$. Using equations \eqref{eq:K_k} and \eqref{eq:K_eve}, we have
\begin{align}
&R\left(\tildebQ_{1},\!\cdots\!,\tildebQ_{K}\right)=\sum_{k=1}^{K}\bigg(\expect{\logdet{\bI+\bK_k^{-1}\tildebH_{k}\tildebQ_{k}\tildebH_{k}^{H}}}\notag\\
&\qquad-\log\det\Big(\bK_{\eve,k}\Big)\bigg)\notag\\
&\equaa\sum_{k=1}^{K}\bigg(\expect{\logdet{\bI+\bK_k^{-1}\tildebH_{k}\bPi_{m}\tildebQ_{k}\bPi_{m}\tildebH_{k}^{H}}}\notag\\
&\qquad-\log\det\Big(\bK_{\eve,k}\Big)\bigg)\notag\\
&=R\left(\bPi_{m}\tildebQ_{1}\bPi_{m},\cdots,\bPi_{m}\tildebQ_{K}\bPi_{m}\right)
\end{align}
where (a) follows from the fact that reversing the sign of the $m$th column does not change its distribution. It is because that the columns of $\tildebH_{k}$ are independent and their distributions are symmetric.
Note the matrix $\half\left(\tildebQ_{k}+\bPi_{m}\tildebQ_{k}\bPi_{m}\right)$ has entries equal to those of $\tildebQ_{k}$ except for the off diagonal entries in the $m$th row and $m$th column, which are $0$. Moreover, due to the concavity of $\logdet{\cdot}$, we can invoke Jensen's inequality,
\begin{align}
&\quad R\bigg(\half\left(\tildebQ_{1}+\bPi_{m}\tildebQ_{1}\bPi_{m}\right),\!\cdots\!,\half\left(\tildebQ_{K}+\bPi_{m}\tildebQ_{K}\bPi_{m}\right)\bigg)\notag\\
&=\sum_{k=1}^{K}\!\Bigg(\!\expect{\logdet{\!\bI\!+\!\half\bK_k^{-1}\tildebH_{k}\!\left(\!\tildebQ_{k}\!+\!\bPi_{m}\tildebQ_{k}\bPi_{m}\right)\tildebH_{k}^{H}}\!}\notag\\
&\qquad\qquad-\log\det\Big(\bK_{\eve,k}\Big)\Bigg)\notag\\
&\geq\sum_{k=1}^{K}\Bigg(\half\expect{\logdet{\bI+\bK_k^{-1}\tildebH_{k}\tildebQ_{k}\tildebH_{k}^{H}}}\notag\\
&\qquad\qquad+\half\expect{\logdet{\bI+\bK_k^{-1}\tildebH_{k}\bPi_{m}\tildebQ_{k}\bPi_{m}\tildebH_{k}^{H}}}\notag\\
&\qquad\qquad-\log\det\Big(\bK_{\eve,k}\Big)\Bigg)\notag\\
&=\half R\left(\tildebQ_{1},\!\cdots\!,\tildebQ_{K}\right)
+\half R\left(\bPi_{m}\tildebQ_{1}\bPi_{m},\!\cdots\!,\bPi_{m}\tildebQ_{K}\bPi_{m}\right)\notag\\
&=R\left(\tildebQ_{1},\cdots,\tildebQ_{K}\right).
\end{align}
Note that nulling the off diagonal entries of any row and column of $\tildebQ_{1},\cdots,\tildebQ_{K}$ will increase the secrecy sum-rate lower bound in \eqref{eq:sec_rate_re}. Repeating this process for $m$ from $1$ to $M$, we find that \eqref{eq:sec_rate_re} is maximized when $\tildebQ_{1},\cdots,\tildebQ_{K}$ all are diagonal, which indicates that $\bPhi_{k}=\bV$. This concludes the proof.

\vspace{5mm}
\section{Proof of {\lmref{lm:ineq}}}\label{app:ineq_pf}

We need to prove that
\begin{equation}
\expect{\frac{x}{a+bx}}-\expect{\frac{\bar{x}}{a+bx}}\leq0
\end{equation}
where $a>0$, $b>0$, and $\bar{x}=\expect{x}$. By using Jensen's inequality, we have
\begin{align}
&\expect{\frac{x}{a+bx}}-\expect{\frac{\bar{x}}{a+bx}}\notag\\
=&\expect{\frac{x-\bar{x}}{a+bx}}\notag\\
\leq&\frac{\expect{x}-\bar{x}}{a+b\expect{x}}\notag\\
=&\frac{\bar{x}-\bar{x}}{a+b\expect{x}}\notag\\
=&0
\end{align}
where the  inequality follows from the fact that $\frac{x-\bar{x}}{a+bx}$ is a strictly concave function with regard to $x$ on $x\geq0$. The equality holds if and only if $x=\bar{x}$ with probability one for every $x$.

\section{Proof of {\thref{th:eigenvalue2}}}\label{app:eigenvalue2_pf}

For $N_r=1$, the optimization problem in \eqref{prob:def_prob2} can be rewritten as
\begin{align}\label{prob:def_prob_single}
&\argmax{\bLambda_{1},\cdots,\bLambda_{K}}\sum_{k=1}^{K}\Bigg(\expect{\log\bigg(1+\tilde{\bh}_{k}^{H}\bLambda_k\tilde{\bh}_{k}+\tr{\bLambda_{\backslash k}\tildebR_{k}}\bigg)}\notag\\
&\qquad-\log\left(1\!+\!\tr{\bLambda_{\backslash k}\tildebR_{k}}\right)
\!-\!\sum_{j=1}^{N_{e}}\log\left(1\!+\!\tr{\bLambda_k\check{\bR}_{\eve,j}}\right)\!\!\Bigg)\notag\\
&\quad\st\quad\sum_{k}\tr{\bLambda_{k}}\leq P\notag\\
&\quad\qquad\qquad\qquad\bLambda_{k}\succeq \mathbf{0},\quad k=1,\cdots,K
\end{align}
where $\bLambda_{\backslash k}=\sum_{k'\neq k}\bLambda_{k'}$, $\tilde{\bh}_{k}=[\tilde{h}_{k,1},\cdots,\tilde{h}_{k,M}]^{T}$, and $\check{\bR}_{\eve,j}=\diag{\bomega_{\eve,j}}$. Here, $\bomega_{\eve,j}$ is the $j$th row of matrix $\bOmega_{\eve}$.

Let diagonal matrices $\bPsi_k$ and $\mu$ be the Lagrange multipliers, and the cost function can expressed as
\begin{align}
\mathcal{L}&=
\sum_{k=1}^{K}\Bigg(\expect{\log\bigg(1+\tilde{\bh}_{k}^{H}\bLambda_k\tilde{\bh}_{k}+\tr{\bLambda_{\backslash k}\tildebR_{k}}\bigg)}\notag\\
&-\log\left(1\!+\!\tr{\bLambda_{\backslash k}\tildebR_{k}}\right)
-\sum_{j=1}^{N_{e}}\log\Big(1+\tr{\bLambda_k\check{\bR}_{\eve,j}}\!\Big)\!\Bigg)\notag\\
&+\sum_{k}\tr{\bPsi_k\bLambda_{k}}-\mu\left(\sum_{k}\tr{\bLambda_{k}}-P\right).
\end{align}

Then, let $\tilde{r}_{k,m}$, $\check{r}_{\eve,j,m}$, $\psi_{k,m}^{op}$, and $\lambda_{m}^{\op}$ be the $m$th diagonal entry of $\tildebR_{k}$, $\check{\bR}_{\eve,j}$, $\bPsi_{k}^{\op}$, and $\bLambda_{k}^{\op}$, respectively.
The Karush-Kuhn-Tucker (KKT) conditions for the optimal $\lambda_{k,m}^{\op}$, $\psi_{k,m}^{\op}$, and $\mu^{\op}$ can be expressed as
\begin{align}
&\pppd{\mathcal{L}}{\lambda_{k,m}^{\op}}=\expect{\frac{\tilde{h}_{k,m}\tilde{h}_{k,m}^{*}}{1+\tilde{\bh}_{k}^{H}\bLambda_{k}^{\op}\tilde{\bh}_{k}+\tr{\bLambda_{\backslash k}^{\op}\tildebR_{k}}}}\notag\\
&\!-\!\sum_{j=1}^{N_{e}}\frac{\check{r}_{\eve,j,m}}{1\!+\!\tr{\bLambda_{k}^{\op}\check{\bR}_{\eve,j}}}
-\sum_{k'\neq k}\left(\frac{\tilde{r}_{k',m}}{1+\tr{\bLambda_{\backslash k'}^{\op}\tildebR_{k'}}}\right.\notag\\
&\left.\!-\!\expect{\!\frac{\tilde{r}_{k',m}}{\!1\!+\!\tilde{\bh}_{k'}^{H}\bLambda_{k'}^{\op}\tilde{\bh}_{k'}\!+\!\tr{\bLambda_{\backslash k'}^{\op}\tildebR_{k'}}\!}\!}\!\right)
\!\!+\!\psi_{k,m}^{\op}\!-\!\mu^{\op}\!=\!0\!\!\label{eq:KKT_pppd}\\
&\psi_{k,m}^{\op}\lambda_{k,m}^{\op}\!=\!0,\ \mu^{op}\left(\sum_{k}\tr{\bLambda_{k}^{op}}\!-\!P\right)\!=\!0,\ \psi_{k,m}^{\op}\geq\!0\\
&\mu^{op}\geq0,\ \lambda_{k,m}^{op}\geq0,\ k=1,\cdots,K;\ m=1,\cdots,M.\label{eq:KKT_mu0}
\end{align}

Multiplying equation \eqref{eq:KKT_pppd} both sides by $\lambda_{k,m}^{op}$ and utilizing the condition $\psi_{k,m}^{\op}\lambda_{k,m}^{\op}=0$, we have
\begin{align}
&\mu^{op}\lambda_{k,m}^{op}=\expect{\frac{\lambda_{k,m}^{op}\tilde{h}_{k,m}\tilde{h}_{k,m}^{*}}{1+\tilde{\bh}_{k}^{H}\bLambda_{k}^{\op}\tilde{\bh}_{k}+\tr{\bLambda_{\backslash k}^{\op}\tildebR_{k}}}}\notag\\
&\qquad-\sum_{j=1}^{N_{e}}\frac{\lambda_{k,m}^{op}\check{r}_{\eve,j,m}}{1+\tr{\bLambda_{k}^{\op}\check{\bR}_{\eve,j}}}\notag\\
&\qquad-\sum_{k'\neq k}\left(\notag\frac{\lambda_{k,m}^{op}\tilde{r}_{k',m}}{1+\tr{\bLambda_{\backslash k'}^{\op}\tildebR_{k'}}}\right.\\
&\qquad-\left.\expect{\frac{\lambda_{k,m}^{op}\tilde{r}_{k',m}}{1+\tilde{\bh}_{k'}^{H}\bLambda_{k'}^{\op}\tilde{\bh}_{k'}+\tr{\bLambda_{\backslash k'}^{\op}\tildebR_{k'}}}}\right)\label{eq:KKT_mu}.
\end{align}
Repeat this process for $m$ from 1 to $M$ and summate, we can obtain \eqref{eq:KKT_sum} at the top of the next page.
\begin{figure*}[t]
\begin{align}
\mu^{op}\tr{\bLambda_{k}^{op}}&=\sum_{m=1}^M\left(\expect{\frac{\lambda_{k,m}^{op}\tilde{h}_{k,m}\tilde{h}_{k,m}^{*}}{1+\tilde{\bh}_{k}^{H}\bLambda_{k}^{\op}\tilde{\bh}_{k}+\tr{\bLambda_{\backslash k}^{\op}\tildebR_{k}}}}\right.
-\sum_{j=1}^{N_{e}}\frac{\lambda_{k,m}^{op}\check{r}_{\eve,j,m}}{1+\tr{\bLambda_{k}^{\op}\check{\bR}_{\eve,j}}}\notag\\
&\qquad-\sum_{k'\neq k}\left(\frac{\lambda_{k,m}^{op}\tilde{r}_{k',m}}{1+\tr{\bLambda_{\backslash k'}^{\op}\tildebR_{k'}}}\right.
\left.\left.-\expect{\frac{\lambda_{k,m}^{op}\tilde{r}_{k',m}}{1+\tilde{\bh}_{k'}^{H}\bLambda_{k'}^{\op}\tilde{\bh}_{k'}+\tr{\bLambda_{\backslash k'}^{\op}\tildebR_{k'}}}}\right)\right)\notag\\
&=1-\expect{\frac{1+\tr{\bLambda_{\backslash k}^{\op}\tildebR_{k}}}{1+\tilde{\bh}_{k}^{H}\bLambda_{k}^{\op}\tilde{\bh}_{k}+\tr{\bLambda_{\backslash k}^{\op}\tildebR_{k}}}}
-\sum_{j=1}^{N_{e}}\left(1-\frac{1}{1\!+\!\tr{\bLambda_{k}^{\op}\check{\bR}_{\eve,j}}}\!\right)\notag\\
&\qquad-\sum_{m=1}^M\!\sum_{k'\neq k}\!\left(\!\frac{\lambda_{k,m}^{op}\tilde{r}_{k',m}}{1\!+\!\tr{\bLambda_{\backslash k'}^{\op}\tildebR_{k'}}}\right.
-\left.\expect{\frac{\lambda_{k,m}^{op}\tilde{r}_{k',m}}{1+\tilde{\bh}_{k'}^{H}\bLambda_{k'}^{\op}\tilde{\bh}_{k'}+\tr{\bLambda_{\backslash k'}^{\op}\tildebR_{k'}}}}\right)\label{eq:KKT_sum}
\end{align}
\hrulefill
\end{figure*}

Due to the conditions $\mu^{op}\geq0$ and $\lambda_{k,m}^{op}\geq0$, it is east to find that
\begin{equation}
\mu^{op}\tr{\bLambda_{k}^{op}}\geq0.
\end{equation}
Furthermore, observing the equation \eqref{eq:KKT_sum}, we have
\begin{align}
&\sum_{m=1}^M\sum_{k'\neq k}\left(\frac{\lambda_{k,m}^{op}\tilde{r}_{k',m}}{1+\tr{\bLambda_{\backslash k'}^{\op}\tildebR_{k'}}}\right.\notag\\
&\qquad\left.-\expect{\frac{\lambda_{k,m}^{op}\tilde{r}_{k',m}}{1+\tilde{\bh}_{k'}^{H}\bLambda_{k'}^{\op}\tilde{\bh}_{k'}+\tr{\bLambda_{\backslash k'}^{\op}\tildebR_{k'}}}}\right)\geq0.\label{eq:KKT_leq}
\end{align}
Thus, combining \eqref{eq:KKT_sum}-\eqref{eq:KKT_leq}, we can obtain the following inequalities
\begin{align}
&1-\expect{\frac{1+\tr{\bLambda_{\backslash k}^{\op}\tildebR_{k}}}{1+\tilde{\bh}_{k}^{H}\bLambda_{k}^{\op}\tilde{\bh}_{k}+\tr{\bLambda_{\backslash k}^{\op}\tildebR_{k}}}}\notag\\
&\qquad\qquad-\maximize{j=1,\cdots,N_e}\left(1-\frac{1}{1+\tr{\bLambda_{k}^{\op}\check{\bR}_{\eve,j}}}\right)\notag\\
\geq&1-\expect{\frac{1+\tr{\bLambda_{\backslash k}^{\op}\tildebR_{k}}}{1+\tilde{\bh}_{k}^{H}\bLambda_{k}^{\op}\tilde{\bh}_{k}+\tr{\bLambda_{\backslash k}^{\op}\tildebR_{k}}}}\notag\\
&\qquad\qquad-\sum_{j=1}^{N_{e}}\left(1-\frac{1}{1+\tr{\bLambda_{k}^{\op}\check{\bR}_{\eve,j}}}\right)\notag\\
\geq&0.
\end{align}
Here, without loss of generality, we suppose that
\begin{align}
j_{\mm} = \argmax{j=1,\cdots,N_e}\left(1-\frac{1}{1+\tr{\bLambda_{k}^{\op}\check{\bR}_{\eve,j}}}\right).
\end{align}
Consequently,
\begin{align}
\!\!\!\!\frac{1}{1\!+\!\tr{\bLambda_{k}^{\op}\check{\bR}_{\eve,j_{\mm}}}}&\!\!\geq\!
\expect{\!\!\frac{1}{1\!+\!\tilde{\bh}_{k}^{H}\bLambda_{k}^{\op}\tilde{\bh}_{k}\!+\!\tr{\bLambda_{\backslash k}^{\op}\tildebR_{k}}}\!\!}\label{eq:KKT_re}\\
\tr{\bLambda_{k}^{\op}\check{\bR}_{\eve,j_{\mm}}}&\!\!\geq\tr{\bLambda_{k}^{\op}\check{\bR}_{\eve,j}},\ j=1,\cdots,N_e.\label{eq:KKT_re1}
\end{align}

Notice that for the $m$th beam of the $k$th user, we have $[\tildebR_{\eve}]_{mm}\ge[\tildebR_{k}]_{mm}$, which indicates that $\tilde{r}_{\eve,m}\geq\tilde{r}_{k,m}$ with $\sum_{j=1}^{N_{e}}\check{r}_{\eve,j,m}=\tilde{r}_{\eve,m}$. Then, if $\lambda_{k,m}^{\op}\neq0$, from the equation \eqref{eq:KKT_mu}, we can obtain \eqref{eq:ineq_mu} at the top of the this page.
where (a) follows from the stochastic independence between the elements of $\tilde{\bh}_{k}$ and Lemma 1, while (b) can be directly obtained from \eqref{eq:KKT_re}-\eqref{eq:KKT_re1} and $\sum_{j=1}^{N_{e}}\check{r}_{\eve,j,m}=\tilde{r}_{\eve,m}$. Owing to $\mu^{op}<0$ does not satisfy the KKT condition in \eqref{eq:KKT_mu0}, consequently, $\lambda_{k,m}^{\op}=0$. This completes the proof.

\begin{figure*}[t]
\begin{align}\label{eq:ineq_mu}
\mu^{op}&=\expect{\frac{\tilde{h}_{k,m}\tilde{h}_{k,m}^{*}}{1+\tilde{\bh}_{k}^{H}\bLambda_{k}^{\op}\tilde{\bh}_{k}+\tr{\bLambda_{\backslash k}^{\op}\tildebR_{k}}}}
-\sum_{j=1}^{N_{e}}\frac{\check{r}_{\eve,j,m}}{1+\tr{\bLambda_{k}^{\op}\check{\bR}_{\eve,j}}}\notag\\
&\qquad\qquad-\sum_{k'\neq k}\left(\frac{\tilde{r}_{k',m}}{1+\tr{\bLambda_{\backslash k'}^{\op}\tildebR_{k'}}}\right.
\left.-\expect{\frac{\tilde{r}_{k',m}}{1+\tilde{\bh}_{k'}^{H}\bLambda_{k'}^{\op}\tilde{\bh}_{k'}+\tr{\bLambda_{\backslash k'}^{\op}\tildebR_{k'}}}}\right)\notag\\
&\leq\expect{\frac{\tilde{h}_{k,m}\tilde{h}_{k,m}^{*}}{1+\tilde{\bh}_{k}^{H}\bLambda_{k}^{\op}\tilde{\bh}_{k}+\tr{\bLambda_{\backslash k}^{\op}\tildebR_{k}}}}
-\sum_{j=1}^{N_{e}}\frac{\check{r}_{\eve,j,m}}{1+\tr{\bLambda_{k}^{\op}\check{\bR}_{\eve,j}}}\notag\\
&\mathop{<}^{(\textrm{a})}\expect{\frac{\tilde{r}_{k,m}}{1+\tilde{\bh}_{k}^{H}\bLambda_{k}^{\op}\tilde{\bh}_{k}+\tr{\bLambda_{\backslash k}^{\op}\tildebR_{k}}}}
-\sum_{j=1}^{N_{e}}\frac{\check{r}_{\eve,j,m}}{1+\tr{\bLambda_{k}^{\op}\check{\bR}_{\eve,j}}}\notag\\
&\mathop{\leq}^{(\textrm{b})}\frac{\tilde{r}_{k,m}}{1+\tr{\bLambda_{k}^{\op}\check{\bR}_{\eve,j_{\mm}}}}
-\frac{\tilde{r}_{\eve,m}}{1+\tr{\bLambda_{k}^{\op}\check{\bR}_{\eve,j_{\mm}}}}\notag\\
&\leq0.
\end{align}
\hrulefill
\end{figure*}

\begin{figure*}[!t]
\begin{align*}\label{eq:det_KKT_final}
\tag{80}
\begin{cases}
\frac{\gamma_{k,m}^{(i+1)}}{1+\gamma_{k,m}^{(i+1)}\lambda_{k,m}^{(i+1)}}+\sum\limits_{l \neq k}^{K}\sum\limits_{j=1}^{N_{r}}\frac{\check{r}_{l,m,j}}{\tilde{\gamma}_{l,j}^{(i+1)}+\tr{\check{\bR}_{l,j}\bLambda_{\backslash l}^{(i+1)}}}=\delta_{k,m}^{(i)}+\mu^{(i+1)},\quad\quad&\mu^{(i+1)}<\nu_{k,m}^{(i+1)}-\delta_{k,m}^{(i)}\\
\lambda_{k,m}^{(i+1)}=0,\quad\quad&\mu^{(i+1)}\geq\nu_{k,m}^{(i+1)}-\delta_{k,m}^{(i)}
\end{cases}
\end{align*}
\hrulefill
\end{figure*}

\begin{figure*}[!t]
\begin{align*}\label{eq:C2}
\tag{84}
\bar{C}^{(i)}(\bar{\bX}_{1},\cdots,\bar{\bX}_{K})\leq\tilde{C}^{(i)}\left(\frac{1}{KM}\bar{\bX}_{1}+\frac{KM-1}{KM}\bX_{1}^{(t)},\cdots,\frac{1}{KM}\bar{\bX}_{K}+\frac{KM-1}{KM}\bX_{K}^{(t)}\right)
\end{align*}
\hrulefill
\end{figure*}

\begin{figure*}[!t]
\begin{align*}\label{eq:IWFA_conv}
\tag{85}
\tilde{C}^{(i)}(\bX_{1}^{(t)},\cdots,\bX_{K}^{(t)})\leq\tilde{C}^{(i)}\left(\frac{1}{KM}\bar{\bX}_{1}+\frac{KM-1}{KM}\bX_{1}^{(t)},\cdots,\frac{1}{KM}\bar{\bX}_{K}+\frac{KM-1}{KM}\bX_{K}^{(t)}\right)
\end{align*}
\hrulefill
\end{figure*}

\vspace{5mm}
\section{Proof of {\thref{th:deterministic}}}\label{app:deterministic_pf}

The cost function of the problem in \eqref{prob:def_prob5} can be defined as
\begin{align}
\mathcal{L}=&\sum_{k}\overline{\cR}_{k,1}(\bLambda_{1},\cdots,\bLambda_{K})\notag\\
&\quad-\sum_{k}\tr{\left(\ppd{\bLambda_{k}}\sum_{l}\cR_{l,2}(\bLambda_{1}^{(i)},\cdots,\bLambda_{K}^{(i)})\right)^{T}\bLambda_{k}}\notag\\
&\quad+\sum_{k=1}^{K}\tr{\bPsi_{k}\bLambda_{k}}-\mu\left(\tr{\sum_{k}\bLambda_{k}}-P\right)
\end{align}
where the Lagrange multipliers $\bPsi_{k}\succeq\bzero$ and $\mu\geq0$ are related to the problem constraints.

From \eqref{eq:DE_R1}, the gradient of $\overline{\cR}_{k,1}(\bLambda_{1},\cdots,\bLambda_{K})$ with respect to $\bLambda_{k}$ can be expressed as
\begin{align}
&\!\!\!\ppd{\bLambda_{k}}\overline{\cR}_{k,1}(\bLambda_{1},\cdots,\bLambda_{K})=\left(\bI+\bGamma_{k}\bLambda_{k}\right)^{-1}\bGamma_{k}\notag\\
&+\sum_{m,n}\pppd{\overline{\cR}_{k,1}(\bLambda_{1},\cdots,\bLambda_{K})}{\left[\tilde{\eta}_{k}\left(\bPhi_{k}^{-1}\bLambda_{k}\right)\right]_{mn}}
\pppd{\left[\tilde{\eta}_{k}\left(\bPhi_{k}^{-1}\bLambda_{k}\right)\right]_{mn}}{\bLambda_{k}}\notag\\
&+\sum_{m,n}\pppd{\overline{\cR}_{k,1}(\bLambda_{1},\cdots,\bLambda_{K})}{\left[\eta_{k}\left(\tilde{\bPhi}_{k}^{-1}\barbK_{k}^{-1}\right)\right]_{mn}}
\pppd{\left[\eta_{k}\left(\tilde{\bPhi}_{k}^{-1}\barbK_{k}^{-1}\right)\right]_{mn}}{\bLambda_{k}}.
\end{align}
Using methods similar to that in the proof of Theorem 4 in \cite{Lu16Deterministic}, we obtain
\begin{align}
\pppd{\overline{\cR}_{k,1}(\bLambda_{1},\cdots,\bLambda_{K})}{\left[\tilde{\eta}_{k}\left(\bPhi_{k}^{-1}\bLambda_{k}\right)\right]_{mn}}&=0\\
\pppd{\overline{\cR}_{k,1}(\bLambda_{1},\cdots,\bLambda_{K})}{\left[\eta_{k}\left(\tilde{\bPhi}_{k}^{-1}\barbK_{k}^{-1}\right)\right]_{mn}}&=0.
\end{align}
Thus, we have
\begin{equation}\label{eq:der1}
\ppd{\bLambda_{k}}\overline{\cR}_{k,1}(\bLambda_{1},\cdots,\bLambda_{K})=\left(\bI+\bGamma_{k}\bLambda_{k}\right)^{-1}\bGamma_{k}.
\end{equation}

Then, similarly to process of obtaining the gradient of $\overline{\cR}_{k,1}(\bLambda_{1},\cdots,\bLambda_{K})$ with respect to $\bLambda_{k}$, we can obtain the gradient of $\overline{\cR}_{k,1}(\bLambda_{1},\cdots,\bLambda_{K})$ with respect to $\bLambda_{k'}$,\ $k'\neq k$. We have
\begin{equation}\label{eq:der2}
\ppd{\bLambda_{k'}}\overline{\cR}_{k,1}(\bLambda_{1},\!\cdots\!,\bLambda_{K})=\sum_{j=1}^{N_{r}}\frac{\check{\bR}_{k',j}}{\tilde{\gamma}_{k',j}+\tr{\bLambda_{\backslash k'}\check{\bR}_{k',j}}}.
\end{equation}
Therefore, from \eqref{eq:der1}-\eqref{eq:der2}, we have
\begin{align}
\ppd{\bLambda_{a}}\sum_{k}\overline{\cR}_{k,1}&(\bLambda_{1},\cdots,\bLambda_{K})=\left(\bI+\bGamma_{a}\bLambda_{a}\right)^{-1}\bGamma_{a}\notag\\
&\qquad+\sum_{l\neq a}^{K}\sum_{j=1}^{N_{r}}\frac{\check{\bR}_{l,j}}{\tilde{\gamma}_{l,j}+\tr{\bLambda_{\backslash l}\check{\bR}_{l,j}}}.
\end{align}

%
%

Moreover, the derivative of $\sum_{l}\cR_{l,2}(\bLambda_{1},\cdots,\bLambda_{K})$ with respect to $\bLambda_{k}$ can be expressed as
\begin{align}\label{eq:det_der}
\ppd{\bLambda_{k}}\sum_{l}\cR_{l,2}&(\bLambda_{1},\cdots,\bLambda_{K})=\sum_{l\neq k}^{K}\sum_{j=1}^{N_{r}}\frac{\check{\bR}_{l,j}}{1+\tr{\bLambda_{\backslash l}\check{\bR}_{l,j}}}\notag\\
&\qquad\qquad+\sum_{j=1}^{N_{e}}\frac{\check{\bR}_{\eve,j}}{1+\tr{\bLambda_{k}\check{\bR}_{\eve,j}}}.
\end{align}
Due to the fact that $\overline{\cR}_{k,1}(\bLambda_{1},\cdots,\bLambda_{K})$ is strictly concave on $(\bLambda_{1},\cdots,\bLambda_{K})$, the KKT conditions of \eqref{prob:def_prob5} are
\begin{align}
&\pppd{\mathcal{L}}{\bLambda_{a}^{(i+1)}}=\bzero,\ a=1,\cdots,K\label{eq:det_KKT1}\\
&\tr{\bPsi_{a}^{(i+1)}\bLambda_{a}^{(i+1)}}=0,\ \bPsi_{a}^{(i+1)}\succeq\bzero,\ \bLambda_{a}^{(i+1)}\succeq\bzero\\
&\mu^{(i+1)}\left(\tr{\sum_{k}\bLambda_{k}^{(i+1)}}-P\right)=0,\ \mu^{(i+1)}\geq0.\label{eq:det_KKT3}
\end{align}

Since problem in \eqref{prob:def_prob5} is a convex optimization problem, the optimal solution $\bLambda_{k}^{(i+1)}$ can be determined by solving the KKT conditions. We rewrite the first KKT condition \eqref{eq:det_KKT1} as
\begin{align}
&\pppd{\mathcal{L}}{\bLambda_{a}^{(i+1)}}=\left(\bI+\bGamma_{a}^{(i+1)}\bLambda_{a}^{(i+1)}\right)^{-1}\bGamma_{a}^{(i+1)}\notag\\
&\!+\!\sum_{l\neq a}^{K}\!\sum_{j=1}^{N_{r}}\!\frac{\check{\bR}_{l,j}}{\tilde{\gamma}_{l,j}^{(i+1)}\!+\!\tr{\bLambda_{\backslash l}^{(i+1)}\check{\bR}_{l,j}}}
\!-\!\sum_{l\neq a}^{K}\!\sum_{j=1}^{N_{r}}\!\frac{\check{\bR}_{l,j}}{1\!+\!\tr{\bLambda_{\backslash l}^{(i)}\check{\bR}_{l,j}}}\notag\\
&\!-\!\sum_{j=1}^{N_{e}}\frac{\check{\bR}_{\eve,j}}{1+\tr{\bLambda_{a}^{(i)}\check{\bR}_{\eve,j}}}+\bPsi_{a}^{(i+1)}-\mu^{(i+1)}\bI=\bzero.
\end{align}

Then, similarly to \cite{Lu16Deterministic,Zhang13MAC}, It can be found that the KKT conditions \eqref{eq:det_KKT1}-\eqref{eq:det_KKT3} equal to the KKT conditions of the following optimization problem
\begin{align}\label{prob:def_prob7}
&\left[\bLambda_{1}^{(i+1)},\cdots,\bLambda_{K}^{(i+1)}\right]=\argmax{\bLambda_{1},\cdots,\bLambda_{K}}\sum_{k}
\bigg(\logdet{\bI+\bGamma_{k}\bLambda_{k}}\notag\\
&\qquad+\logdet{\tilde{\bGamma}_{k}+\barbK_{k}}-\tr{\bDelta_{k}^{(i)}\bLambda_{k}}\bigg)\notag\\
&\qquad\qquad\st\quad\sum_{k}\tr{\bLambda_{k}}\leq P\notag\\
&\qquad\qquad\qquad\qquad\qquad\bLambda_{k}\succeq \mathbf{0},\quad k=1,\cdots,K.
\end{align}


Thus, the solution of the iterative problem in \eqref{prob:def_prob5} is equivalent to that of problem in \eqref{prob:def_prob7}. Using the KKT conditions of problem in \eqref{prob:def_prob7}, we can find that the optimal solution of \eqref{prob:def_prob7} satisfies \eqref{eq:det_KKT_final}, which is given at the top of this page.
In \eqref{eq:det_KKT_final}, the auxiliary variable $\nu_{k,m}^{(i+1)}$ is given by
\begin{equation}
\setcounter{equation}{81}
\nu_{k,m}^{(i+1)}=\gamma_{k,m}^{(i+1)}+\sum_{l \neq k}^{K}\!\sum_{j=1}^{N_{r}}\frac{\check{r}_{l,m,j}}{\tilde{\gamma}_{l,j}^{(i+1)}
\!+\!\sum\limits_{\substack{(l',m')\\\in\mathcal{S}_{k,m,l}}}\check{r}_{l,m',j}\lambda_{l',m'}^{(i+1)}}
\end{equation}
and the Lagrange multiplier $\mu^{(i+1)}$ is set to satisfy the KKT conditions $\mu^{(i+1)}\left(\tr{\sum_{k}\bLambda_{k}^{(i+1)}}-P\right)=0$ and $\mu^{(i+1)}\geq0$.

\section{Proof of {\thref{th:alg}}}\label{app:alg_pf}

Here, we prove the convergence of the sequence $\big\{\bX_{1}^{(t)},\!\cdots\!,\bX_{K}^{(t)}\big\}_{t=0}^{\infty}$.
To prove the convergence of the proposed IWFA, a function for given $(\bX_{1}^{(t)},\!\cdots\!,\bX_{K}^{(t)})$ is defined as
\begin{align}
&\bar{C}^{(i)}(\bX_{1},\cdots,\bX_{K})=\frac{1}{KM}\sum_{k}^{K}\sum_{m}^{M}
\Bigg(\log\left(1+\gamma_{k,m}^{(i)}x_{k,m}\right)\notag\\
&+\sum_{k'\neq k}\sum_{j}^{N_{r}}\log\bigg(\tilde{\gamma}_{k',j}^{(i)}+\check{r}_{k',m,j}x_{k,m}
+\!\!\sum\limits_{\substack{(l,m')\\\in\mathcal{S}_{k,m,k'}}}\!\!\check{r}_{k',m',j}x_{l,m'}^{(t)}\bigg)\notag\\
&-\delta_{k,m}^{(i)}x_{k,m}
+\sum_{\substack{(k',m')\\\neq (k,m)}}\log\left(1+\gamma_{k',m'}^{(i)}x_{k',m'}^{(t)}\right)\notag\\
&+\!\sum_{j}^{N_{r}}\log\!\bigg(\!\tilde{\gamma}_{k,j}^{(i)}\!+\!\sum_{l\neq k}\sum_{m'}\check{r}_{k,m',j}x_{l,m'}^{(t)}\!\!\bigg)
\!\!-\!\!\!\!\sum_{\substack{(l,m')\\\neq(k,m)}}\!\!\!\delta_{l,m'}^{(i)}x_{l,m'}^{(t)}\!\!\Bigg).\!\!
\end{align}
It can be seen that $\bar{C}^{(i)}(\bX_{1},\cdots,\bX_{K})$ is a concave function with respect to $(\bX_{1},\cdots,\bX_{K})$. For given $(\bX_{1}^{(t)},\cdots,\bX_{K}^{(t)})$ subject to the power constraint $\sum_{k,m}x_{k,m}\leq P$, the $(\bar{\bX}_{1},\cdots,\bar{\bX}_{K})$ generated in Step 14 of IWFA is exactly equal to the solution of maximizing $\bar{C}^{(i)}(\bX_{1},\cdots,\bX_{K})$. Hence, with the $(\bar{\bX}_{1},\cdots,\bar{\bX}_{K})$ obtained from Step 14 of the IWFA, we have the following result
\begin{align}\label{eq:C1}
\bar{C}^{(i)}(\bar{\bX}_{1},\cdots,\bar{\bX}_{K})&\geq\bar{C}^{(i)}(\bX_{1}^{(t)},\cdots,\bX_{K}^{(t)})\notag\\
&=\tilde{C}^{(i)}(\bX_{1}^{(t)},\cdots,\bX_{K}^{(t)}).
\end{align}
From the concavity of $\tilde{C}^{(i)}(\bX_{1},\cdots,\bX_{K})$, it can be shown that \eqref{eq:C2} holds.
Then, combing \eqref{eq:C1} and \eqref{eq:C2}, we have the inequality \eqref{eq:IWFA_conv}. Note that both \eqref{eq:C2} and \eqref{eq:IWFA_conv} are given at the top of the previous page.

Therefore, after Step 20 of the IWFA, we have that $\tilde{C}^{(i)}(\bX_{1}^{(t+1)},\!\cdots\!,\bX_{K}^{(t+1)})\!\geq\!\tilde{C}^{(i)}(\bX_{1}^{(t)},\!\cdots\!,\bX_{K}^{(t)})$. Furthermore, the problem in \eqref{prob:def_prob6} is a convex problem. Thus, the generated $(\bX_{1}^{(t+1)},\!\cdots\!,\bX_{K}^{(t+1)})$ will be convergent.

\bibliographystyle{IEEEtran}
\bibliography{Refabrv_20161223,References_20161129}
\end{document}